\newcommand{\argmin}{\mathop{\text{argmin}}}
\newtheorem{theorem}{Theorem}
\begin{document}

\title{{\huge On the Diversity of Uncoded OTFS Modulation in 
Doubly-Dispersive Channels }}
\author{G. D. Surabhi, Rose Mary Augustine, and A. Chockalingam \\
Department of ECE, Indian Institute of Science, Bangalore 560012 
}

\maketitle
\begin{abstract}
Orthogonal time frequency space (OTFS) is a 2-dimensional (2D) modulation 
technique designed in the delay-Doppler domain. A key premise behind OTFS 
is the transformation of a time varying multipath channel into an almost 
non-fading 2D channel in delay-Doppler domain such that all symbols in a 
transmission frame experience the same channel gain. It has been suggested 
in the recent literature that OTFS can extract full diversity in the
delay-Doppler domain, where full diversity refers to the number of 
multipath components separable in either the delay or Doppler dimension, 
but without a formal analysis. In this paper, we present a formal 
analysis of the diversity achieved by OTFS modulation along with 
supporting simulations. Specifically, we prove that the asymptotic 
diversity order of OTFS (as SNR $\rightarrow \infty$) is one. However,
in the finite SNR regime, potential for a higher order diversity is 
witnessed before the diversity one regime takes over. Also, the diversity 
one regime is found to start at lower BER values for increased frame sizes. 
We also propose a phase rotation scheme for OTFS using transcendental 
numbers and show that OTFS with this proposed scheme extracts full 
diversity in the delay-Doppler domain. 
\end{abstract}
{\em {\textbf{Keywords}}} -- 
{\footnotesize {\em \small 
OTFS modulation, delay-Doppler domain, diversity order, 
phase rotation, MIMO-OTFS.
}}

\section{Introduction}
\label{sec1}
{\let\thefootnote\relax\footnote{This work was supported in part by the
J. C. Bose National Fellowship, Department of Science and Technology,
Government of India, and the Intel India Faculty Excellence Program.}}
Future wireless communication systems are envisioned to support diverse 
requirements that include high mobility application scenarios such as
high-speed train, vehicle-to-vehicle, and vehicle-to-infrastructure 
communications. The dynamic nature of wireless channels in such scenarios 
makes them doubly-dispersive, with multipath propagation effects causing 
time dispersion and Doppler shifts causing frequency dispersion 
\cite{jakes}. Conventional multicarrier modulation techniques such as 
orthogonal frequency division multiplexing (OFDM) mitigate the effect of 
inter-symbol interference (ISI) caused due to time dispersion. However, 
the performance of OFDM systems depends significantly on the orthogonality 
among the subcarriers. Doppler shifts can destroy the orthogonality among 
the subcarriers, resulting in inter-carrier interference (ICI) which 
degrades performance \cite{ofdm1}. 

Orthogonal time frequency space (OTFS) modulation is a recently proposed 
modulation scheme \cite{otfs2}-\cite{otfs3} that uses delay-Doppler 
domain for multiplexing the information symbols instead of time-frequency 
domain as in conventional modulation schemes. OTFS modulation uses 
transformations that spread information across the time-frequency plane. 
This spreading across the time-frequency plane converts a doubly-dispersive 
channel into an almost non-fading channel in the delay-Doppler domain. The 
relatively constant channel gain experienced by all the symbols in an OTFS 
transmission frame can greatly reduce the overhead on the channel estimation 
in a rapidly time varying channel. Another attractive feature of OTFS from 
an implementation viewpoint is that OTFS modulation can be architected over 
any multicarrier modulation (e.g., OFDM) by using additional pre-processing 
and post-processing blocks \cite{otfswhitepaper}. 

OTFS has been shown to achieve significantly better error performance 
compared to OFDM for vehicle speeds ranging from 30 km/h to 500 km/h in 
4 GHz band \cite{otfs2}. It has also been shown to perform well in mmWave 
frequency bands \cite{otfs3}. Owing to the simplicity of implementation 
and robustness to Doppler spreads, several works on OTFS have started
emerging in the recent literature \cite{otfs6}-\cite{otfs9}. OTFS 
systems using OFDM as the inner core have been considered in 
\cite{otfs6},\cite{otfs7}. In \cite{otfs4},\cite{otfs5}, the robustness 
of OTFS modulation has been demonstrated in high Doppler fading channels, 
using low complexity signal detection techniques. While \cite{otfs4} 
proposed a message passing based algorithm for OTFS signal detection, 
\cite{otfs5} proposed a Markov Chain Monte Carlo based 
algorithm for detection and a pseudo-random noise (PN) pilot based scheme
for channel estimation in the delay-Doppler domain. The above detection 
algorithms were devised using a system model based on the vectorized 
input-output relation for OTFS \cite{otfs4}. Signal detection and channel 
estimation for multiple-input multiple-output OTFS (MIMO-OTFS) have been 
considered in \cite{otfs9}, where it has been shown that MIMO-OTFS offers 
significantly better performance compared to MIMO-OFDM. 
It has been highlighted in \cite{otfs2} that the OTFS 
modulation can be viewed as a generalization of TDMA and OFDM. Likewise, 
OTFS can also be interpreted as a generalization of other multicarrier
modulation schemes such as filtered multitone \cite{mtone1} and generalized 
frequency division multiplexing (GFDM) \cite{gfdm1}. Recently, it has been
shown in \cite{otfs_gfdm} that OTFS can be implemented using a GFDM 
framework. 

While recent papers on OTFS have demonstrated the performance superiority 
of OTFS over OFDM, a formal analysis and claim on the diversity order 
achieved by OTFS is yet to appear. It has been suggested in \cite{otfs1a} 
that OTFS can achieve full diversity in the delay-Doppler domain, where
full diversity refers to the number of clustered reflectors in the channel 
(in other words, the number of multipath components separable in either 
the delay or Doppler dimension). However, this suggestion has not been
established through analysis or simulation. Filling this gap, our 
contribution in this paper provides a formal analysis of the diversity 
order achieved by OTFS in doubly-dispersive channels with supporting 
simulation results. The key findings and contributions in this work 
can be summarized as follows.
\begin{itemize}
\item 	We first derive the diversity order of OTFS in a single-input 
	single-output (SISO) setting with maximum likelihood (ML) 
	detection. It is shown that that the asymptotic diversity 
	order of OTFS (as SNR $\rightarrow \infty$) is one. Though the
	asymptotic diversity order is one, potential for a higher order 
	diversity is witnessed in the finite SNR regime before the 
	diversity one regime takes over. Also, it is observed that the 
	diversity one regime starts at lower BER values for increased 
	frame sizes. A lower bound on the BER computed by summing up 
	the pairwise error probabilities corresponding to all pairs of 
	data matrices whose difference matrices have rank one provides
	an analytical support for this observation.
\item   Next, in an attempt to extract full diversity in the asymptotic 
	regime, we propose a phase rotation scheme for OTFS using 
	transcendental numbers. It is shown that OTFS with this proposed
	scheme extracts the full diversity in the delay-Doppler domain.
\item	Finally, we extend the diversity analysis to MIMO-OTFS and 
	show that the asymptotic diversity order is equal 
	to the number of receive antennas. We also extend the phase 
	rotation scheme to MIMO-OTFS system to extract full diversity 
	in the delay-Doppler domain.
\end{itemize}

The rest of the paper is organized as follows. The OTFS modulation scheme
is presented in Sec. \ref{sec2}. The diversity analysis of OTFS in SISO 
setting and corresponding simulation results are presented in Sec.
\ref{sec3}. The proposed phase rotation scheme that achieves full
diversity is presented in Sec. \ref{sec4}. The MIMO-OTFS system, its 
asymptotic diversity order, and phase rotation scheme are presented 
in Sec. \ref{sec5}. Conclusions are presented in Sec. \ref{sec6}. 

\section{OTFS modulation}
\label{sec2}
In this section, we describe OTFS modulation designed in the delay-Doppler 
domain. We first introduce the delay-Doppler representation
and the associated transforms and then present the mathematical
description of the OTFS modulation.
\subsection{Delay-Doppler representation and OTFS modulation}
\label{sec2a}
Fundamentally, a signal can be represented either as a function of time, 
or as a function of frequency, or as a quasi-periodic function of delay 
and Doppler \cite{otfswhitepaper}. These three representations are 
interchangeable by means of the canonical transforms, as depicted in 
Fig. \ref{trans_triangle}. The nodes of the triangle represent the three 
ways of representing a signal and the edges represent the canonical 
transformation used for the conversion between them. The conversion 
between the time and frequency representations is  through the Fourier 
transform, and the conversion of the delay-Doppler representation to 
the time and frequency representations is through the Zak transforms 
$Z_t$ and $Z_f$, respectively. It is important to note that the 
composition of any pair of transforms is equal to the remaining one. 
For example, the Fourier transform is a composition of two Zak transforms 
($\mbox{FT}=Z_t\circ Z_f^{-1}$). The signals in the delay-Doppler domain 
can be viewed as functions $\phi(\tau,\nu)$ on a two-dimensional 
delay-Doppler plane whose points are parametrized by $\tau$ and $\nu$. 
This representation is a quasi-periodic representation and has an 
associated delay period $\tau_r$ and a Doppler period $\nu_r$, such 
that $\tau_r \nu_r=1$. The delay-Doppler representation can be converted 
to time and frequency representations by Zak transforms  $Z_t$ and $Z_f$, 
respectively, given by 
\cite{otfswhitepaper}
\begin{equation}
Z_t(\phi)= \hspace{-1mm} \int_0^{\nu_r} \hspace{-3mm} e^{j2\pi t \nu} \phi(t,\nu) \mathrm{d} \nu, \ \
Z_f(\phi)=\hspace{-1mm} \int_0^{\tau_r} \hspace{-3mm} e^{-j2\pi \tau f} \phi(\tau,f) \mathrm{d} \tau.
\end{equation}
\begin{figure}
\centering
\includegraphics[width=5cm, height=2.5 cm]{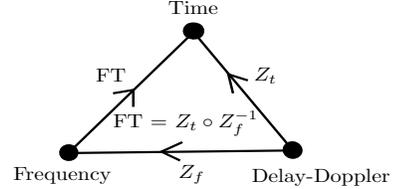}
\caption{The fundamental transform triangle.}
\label{trans_triangle}
\vspace{-4mm}
\end{figure}
A fundamental feature of OTFS modulation that distinguishes it from 
other time-frequency (TF) modulation schemes is the use of delay-Doppler 
domain for multiplexing the modulation symbols. These symbols in the 
delay-Doppler domain can be converted into time domain using the Zak 
transform $Z_t$. The transformation that uses a single Zak transform 
to convert a signal in delay-Doppler domain to a signal in time domain 
can also be carried out in two steps. That is, the signal in the 
delay-Doppler domain is first transformed to time-frequency domain, 
and the resulting time-frequency signal is converted to a time domain 
signal using a second transformation. As a consequence of this two step 
transformation, 
OTFS modulation can be implemented using simple pre- and post-processing 
steps over any multicarrier modulation scheme such as OFDM. The series 
of transformations involved in OTFS modulation transforms a time varying 
multipath channel into a slowly varying channel in the delay-Doppler 
domain. The complex baseband channel response in the delay-Doppler 
domain is denoted by $h(\tau,\nu)$, where $\tau$ and $\nu$ are the 
delay and Doppler variables, respectively. With this representation, 
the received signal $y(t)$ due to a transmit signal $x(t)$ is given by
\begin{equation}
\label{channel}
y(t)=\int_{\nu} \int_{\tau} h(\tau,\nu)x(t-\tau)e^{j2\pi\nu(t-\tau)} \mathrm{d} \tau \mathrm{d} \nu.
\end{equation} 
The channel coefficients in this representation correspond to the group 
of reflectors  associated with a particular delay depending on reflectors' 
relative distance and Doppler value depending on its relative velocity. 
Since the velocity and the relative distance remain roughly the same for 
a relatively longer duration, the delay-Doppler channel coefficients 
are time invariant for a larger observation time as compared to that in 
time-frequency representation \cite{otfs1}. Also, the delay-Doppler 
representation of the channel impulse response yields a sparse 
representation of the channel, thus requiring only fewer channel 
parameters to be estimated. With this, we now proceed to the description 
of the OTFS modulation scheme architected using pre- and post-processing 
operations over a multicarrier modulation.
\begin{figure*}[t]
\centering
\includegraphics[width=12.5 cm, height=3.0 cm]{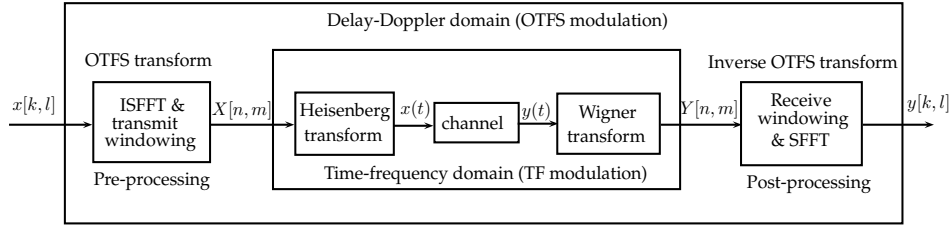}
\vspace{1mm}
\caption{OTFS modulation scheme.}
\label{blkdiag}
\vspace{-2mm}
\end{figure*}

The block diagram of the OTFS modulation scheme is shown in Fig. \ref{blkdiag}. 
The inner box in the block diagram is the familiar multicarrier TF modulation 
and the outer box with pre- and post-processor is the OTFS modulator that 
operates in the delay-Doppler domain. At the transmitter, the information 
symbols (e.g., QAM symbols) denoted by $x[k,l]$ residing in delay-Doppler 
domain are mapped to the TF signal $X[n,m]$ through the 2D inverse symplectic 
finite Fourier transform (ISFFT) and windowing. Subsequently, this TF signal 
is transformed into a time domain signal $x(t)$ through Heisenberg transform
for transmission. At the receiver, the received signal $y(t)$ is transformed 
back to a TF domain signal $Y[n,m]$ through Wigner transform (inverse 
Heisenberg transform). The TF signal $Y[n,m]$ thus obtained is mapped to 
the delay-Doppler domain signal $y[k,l]$ using the symplectic finite Fourier 
transform (SFFT) for demodulation. In the subsequent subsections, we
describe the TF modulation and the OTFS modulation in detail.

\vspace{-4mm}
\subsection{Time-frequency modulation}
\label{sec2b}
\begin{itemize}
\item The time-frequency plane is sampled at intervals $T$ and $\Delta f$, 
respectively, to obtain a 2D lattice or grid $\Lambda$, which can be 
defined as
$\Lambda=\lbrace(nT,m\Delta f), n=0,\cdots,N-1, m=0,\cdots, M-1\rbrace$.
\item The signal in TF domain $X[n,m]$, $n=0,\cdots,N-1$, $m=0,\cdots,M-1$  
is transmitted in a given packet burst, which has duration $NT$ and occupies 
a bandwidth of $M\Delta f$.
\item Let $g_{tx}(t)$ and $g_{rx}(t)$ denote the transmit and receive 
pulses, respectively. We assume $g_{tx}(t)$, $g_{rx}(t)$ to be ideal pulses 
satisfying the bi-orthogonality property with respect to translations by 
integer multiples of time $T$ and frequency $\Delta f$, i.e.,
\begin{equation}
\int e^{-j2\pi m\Delta f(t-nT)}g^*_{rx}(t-nT)g_{tx}(t) dt=\delta(m)\delta(n).
\label{biorth}
\end{equation}
The bi-orthogonality property of the pulse shapes ensures that the 
cross-symbol interference is eliminated in the symbol reception.
Although ideal pulses cannot be realized in practice, 
given the constraint imposed by the uncertainty principle, they can be 
approximated by the pulses whose support is highly concentrated in 
time and frequency \cite{otfs4}. Design of pulses concentrated in time 
and frequency to minimize the cross-symbol interference has been 
discussed in \cite{pulse_design1},\cite{pulse_design2}.
  
\item TF modulation: The signal in 
TF domain $X[n,m]$ is transformed to the time domain signal $x(t)$ through 
the Heisenberg transform, given by
\begin{equation}
\hspace{-0mm}
x(t)= \sum_{n=0}^{N-1} \sum_{m=0}^{M-1} X[n,m]g_{tx}(t-nT)e^{j2\pi m \Delta f (t-nT)}.
\label{tfmod}
\end{equation}

\item TF demodulation: At the receiver, the sufficient statistic for symbol 
detection is obtained by matched filtering the received signal with the 
receive pulse $g_{rx}(t)$. This requires the computation of the 
cross-ambiguity function $A_{g_{rx},y}(\tau,\nu)$, given by 
\begin{equation}
\label{crossambig}
A_{g_{rx},y}(\tau,\nu)=\int g_{rx} ^*(t-\tau) y(t) e^{-j2 \pi \nu(t-\tau)} \mathrm{d}t.
\end{equation}
Sampling this function on the lattice $\Lambda$ yields the matched filter 
output, given by
\begin{equation}
\label{wigner}
Y[n,m] = A_{g_{rx},y}(\tau,\nu)|_{\tau =nT,\nu =m \Delta f}.
\end{equation} 
Equation (\ref{wigner}) is called the Wigner transform, which can be 
looked at as the inverse of the Heisenberg transform.
A detailed discussion on Heisenberg and Wigner 
representations and its applications in communication 
theory has been presented in \cite{Heisenberg},\cite{rep_theory}.
\item If $h(\tau,\nu)$ has finite support bounded by
$(\tau_{\mbox{\scriptsize{max}}},\nu_{\mbox{\scriptsize{max}}})$ and if 
$A_{g_{rx}g_{tx}}(\tau,\nu)=0$ for $\tau \in (nT-\tau_{\mbox{\scriptsize{max}}},nT+\tau_{\mbox{\scriptsize{max}}})$, 
$\nu \in (m\Delta f-\nu_{\mbox{\scriptsize{max}}},m\Delta f+\nu_{\mbox{\scriptsize{max}}})$, 
$\forall n,m$ except for $n=0, m=0$ where $A_{g_{rx}g_{tx}}(\tau,\nu)=1$,
the relation between $Y[n,m]$ and $X[n,m]$ for TF modulation can be derived 
as \cite{otfs1a}
\begin{equation}
\label{tfinpop}
Y[n,m] = H[n,m]X[n,m] + V[n,m],
\end{equation}
where $V[n,m]$ is the 
noise at the output of the matched filter and $H[n,m]$ is given by
\begin{equation}
H[n,m]=\int_{\tau} \int_{\nu} h(\tau,\nu) e^{j2\pi \nu nT} e^{-j2\pi (\nu + m \Delta f) \tau} \mathrm{d} \nu \mathrm{d} \tau.
\end{equation}
\end{itemize}
From (\ref{tfinpop}), note that $X[n,m]$ is not affected by cross-symbol 
interference either in time or in frequency. In the absence of noise, the 
received symbol $X[n,m]$ is same as the transmitted symbol except for 
the complex scale factor $H[n,m]$. Note that the complex scale factor 
$H[n,m]$ is a weighted superposition of Fourier exponential functions.
This relation can be formally expressed via a two dimensional transform 
called the symplectic Fourier transform.

\vspace{-2mm}
\subsection{OTFS modulation}
\label{sec2c}
\begin{itemize}
\item 	Let $X_p[n,m]$ denote the periodized version of $X[n,m]$ with 
period $(N,M)$. The SFFT of $X_p[n,m]$ is defined as
\begin{equation}
x_p[k,l] = \sum_{n=0}^{N-1} \sum_{m=0}^{M-1} X_p[n,m] e^{-j2\pi( {nk \over N} - {ml \over M} )},
\end{equation}
and ISFFT of {\small $x_p[k,l]=SFFT^{-1} (x_p[k,l])$} is defined as 
\begin{equation}
X_p[n,m] = {1 \over MN }\sum_{k=0}^{N-1} \sum_{l=0}^{M-1} x_p[k,l] e^{j2\pi( {nk \over N}-{ml \over M})}.
\end{equation}

\item OTFS transform: The information symbols in the delay-Doppler domain 
$x[k,l]$ are mapped to TF domain symbols $X[n,m]$ as follows:
\begin{equation}
X[n,m] = W_{tx}[n,m]SFFT^{-1}(x_p[k,l]),
\label{otfsmod}
\end{equation}
where $W_{tx}[n,m]$ is the transmit windowing square summable function. 
\item$X[n,m]$ thus obtained is in TF domain and is TF modulated as 
described in the previous subsection for transmission through the channel. 
\item OTFS demodulation: 
The received signal $y(t)$ is transformed into $Y[n,m]$ using Wigner filter 
as in (\ref{wigner}). A receive window $W_{rx}[n,m]$ is applied 
to $Y[n,m]$ to obtain $Y_W[n,m]=W_{rx}[n,m]Y[n,m]$. This is periodized to 
obtain $Y_p[n,m]$ with period  $(N,M)$, given by 
\begin{equation}
Y_p[n,m] = \sum_{k,l=- \infty}^{\infty}  Y_W[n-kN,m-lM].
\label{otfsdemod1}
\end{equation}
\item The symplectic finite Fourier transform is then applied to $Y_p[n,m]$ 
to convert it from TF domain back to delay-Doppler domain to obtain  
$y_p[k,l]$ as
\begin{equation}
y_p[k,l]=SFFT (Y_p[n,m]).
\label{otfsdemod2}
\end{equation}
The output sequence of demodulated symbols is obtained
as $y[k,l]=y_p[k,l]$ for $k=0,1,\cdots,N-1$ and $l=0,1,\cdots,M-1$.\\
{\em Note:} Periodization here can be understood by taking 
the analogy of the discrete time Fourier transform (DTFT). The DTFT of a 
discrete time signal is a continuous and periodic function of frequency.
Sampling the spectrum in the frequency domain periodizes the signal in the 
time domain. Analogously, the discrete symplectic Fourier transform (DSFT) 
of a sequence is continuous and periodic \cite {DSFT}. Sampling the DSFT 
of a sequence in the delay-Doppler domain periodizes the signal in the 
time-frequency domain.
\end{itemize}
Using the equations (\ref{tfmod})-(\ref{otfsdemod2}), the input-output 
relation in OTFS can be derived as \cite{otfs2}
\begin{equation}
y[k,l]={1 \over MN} \sum_{l'=0}^{M-1} \sum_{k'=0}^{N-1} x[k',l'] h_w \left( {k-k' \over NT}, {l-l' \over M \Delta f} \right) + v[k,l],
\label{otfsinpoutp}
\end{equation}
where
\begin{equation}
h_w \left({k-k' \over NT}, {l-l' \over M \Delta f} \right) = h_w (\nu,\tau)|_{\nu={k-k' \over NT},\tau={l-l'\over M \Delta f}},
\label{deldoppchannel}
\end{equation}
where $h_w(\nu,\tau)$ is the circular convolution of the channel response 
with a windowing function $w(\nu,\tau)$, given by
\begin{equation}
h_w(\nu,\tau)=\int_{\nu'} \int_{\tau'} h(\tau',\nu')w(\nu-\nu',\tau-
\tau') e^{-j2\pi \tau\nu} \mathrm{d} \tau' \mathrm{d} \nu',
\label{circ_conv}
\end{equation}
and $w(\nu,\tau)$ is the discrete symplectic Fourier transform (DSFT) of the 
time-frequency window, defined as
\begin{equation}
w(\nu,\tau) = \sum_{m=0}^{M-1} \sum_{n=0}^{N-1} W_{tx}[n,m] W_{rx}[n,m] e^{-j2 \pi (\nu nT - \tau m \Delta f)}. 
\label{winfunc}
\end{equation}
Note that the circular convolution in (\ref{circ_conv}) does not involve 
any cyclic prefixing. Circular convolution here arises naturally from the 
OTFS pre- and post-processing (ISFFT and SFFT) operations (from Theorem 2 
of \cite{otfs1a}).

\subsection{Vectorized formulation of the input-output relation}
\label{sec2d}
Consider a channel with $P$ paths, resulting from $P$ clusters of 
reflectors, where each reflector is associated with a delay and a 
Doppler, which can be represented in delay-Doppler domain as
\begin{equation}
h(\tau,\nu) =\sum_{i=1}^{P} h_i \delta(\tau -\tau_i) \delta(\nu-\nu_i),
\label{sparsechannel}
\end{equation}
where $h_i$, $\tau_i$, $\nu_i$ represent the channel gain, delay, and 
Doppler shift associated with $i$th cluster, respectively. 
We define $\tau_i\triangleq\frac{\alpha_i+a_i}{M\Delta f}$ and 
$\nu_i\triangleq\frac{\beta_i+b_i}{NT}$, where $\alpha_i$, $\beta_i$
are integers and $a_i, b_i$ are real, where $-\frac{1}{2}<a_i,b_i \leq 
\frac{1}{2}$. We refer to $a_i$ and $b_i$ as the fractional parts of  
delay $\tau_i$ and Doppler $\nu_i$, respectively. The fractional parts 
$a_i$ and $b_i$ can be neglected if $M$ and $N$ are large, and hence the 
delay resolution $1/M\Delta f$ and the Doppler resolution $1/NT$ are 
sufficiently small to approximate the path delays and Doppler shifts to 
the nearest integer sampling points \cite{DDchannelest}.
We initially assume the fractional parts $a_i$s and
$b_i$s to be zero and carry out the diversity analysis of OTFS in 
Sec. \ref{sec3}. We extend the diversity analysis for non-zero 
fractional delays and Dopplers (i.e., $a_i,b_i\neq0$) in Appendix 
\ref{app_a}. 
Assuming $\tau_i= \frac{\alpha_i}{M\Delta f}$ and 
$\nu_i=\frac{\beta_i}{NT}$ and assuming the transmit 
and receive window function $W_{tx}[n,m]$ and $W_{rx}[n,m]$ to be 
rectangular, the input-output relation for the channel in 
(\ref{sparsechannel}) can be derived as \cite{otfs4}
\begin{equation}
y[k,l] = \sum_{i=1}^{P} h_i' x[(k-\beta_i)_N,(l-\alpha_i)_M] + v[k,l]. 
\label{inpopnofracdopp}
\end{equation}
The $h_i'$s are given by
\begin{equation}
h_i'=h_i e^{-j2 \pi \nu_i \tau_i},
\label{chan_coeff} 
\end{equation}
where $h_i$s are assumed to be i.i.d and distributed as 
$\mathcal{CN}(0,1/P)$ (assuming uniform scattering profile). The 
input-output relation in (\ref{inpopnofracdopp}) can be vectorized as
\cite{otfs4}
\begin{equation}
\mathbf{y} = \mathbf{Hx} + \mathbf{v}, 
\label{vecform}
\end{equation}
where $\mathbf{x}, \mathbf{y}, \mathbf{v} \in \mathbb{C} ^{MN\times 1}$, 
$\mathbf{H} \in \mathbb{C}^{MN\times MN}$, the $(k + Nl)$th element of 
$\mathbf{x}$, $x_{k+Nl}=x[k,l]$, $k=0,\cdots,N-1,\ l=0,\cdots,M-1$, 
and $x[k,l]\in \mathbb{A}$, where  $\mathbb{A}$ is the modulation
alphabet (e.g., QAM\hspace{-1.1mm} /\hspace{-1mm} PSK). Likewise, 
$y_{k+Nl}=y[k,l]$ and 
$v_{k+Nl}=v[k,l]$, $k=0,\cdots,N-1,\ l=0,\cdots,M-1$. 

\section{Diversity Analysis of OTFS}
\label{sec3}
Consider the vectorized formulation of input-output relation in the
SISO OTFS scheme given by (\ref{vecform}). Note that there are only $P$ 
non-zero elements in each row and column of the equivalent channel matrix 
($\mathbf{H}$) due to modulo operations. Hence the vectorized input-output 
relation in (\ref{vecform}) can be rewritten in an alternate form as
\begin{equation}
\mathbf{y}^T=\mathbf{h'}\mathbf{X}+\mathbf{v}^T,
\label{hXform}
\end{equation}
where $\mathbf{y}^T$ is $1\times MN$ received vector, $\mathbf{h'}$ is a 
$1\times P$ vector whose $i$th entry is given by 
$h'_i=h_ie^{-j2\pi\nu_i\tau_i}$, $\mathbf{v}^T$ is the $1\times MN$ noise 
vector, and $\mathbf{X}$ is a $P\times MN$ matrix whose 
$i$th column ($i=k+Nl$,\ $i=0,1,\cdots,MN-1$), denoted by $\mathbf{X}[i]$, 
is given by
\begin{align}
\mathbf{X}[i] & =\begin{bmatrix}
x_{(k-\beta_1)_N+N(l-\alpha_1)_M} \\
x_{(k-\beta_2)_N+N(l-\alpha_2)_M} \\
\vdots \\
x_{(k-\beta_P)_N+N(l-\alpha_P)_M}
\end{bmatrix}
\label{X_mat}.
\end{align}
The representation of $\mathbf{X}$ in the form given in (\ref{hXform}) 
allows us to view $\mathbf{X}$ as a $P\times MN$ symbol matrix. For 
convenience, we normalize the elements of $\mathbf{X}$ so that the 
average energy per symbol time is one. The signal-to-noise ratio (SNR), 
denoted by $\gamma$, is therefore given by $\gamma=1/N_0$. Assuming 
perfect channel state information and ML detection at the receiver, the 
probability of transmitting the symbol matrix $\mathbf{X}_i$ and deciding 
in favor of $\mathbf{X}_j$ at the receiver is the pairwise error probability 
(PEP) between $\mathbf{X}_i$ and $\mathbf{X}_j$, given by \cite{DTse}
\begin{equation}
P(\mathbf{X}_i\rightarrow \mathbf{X}_j|\mathbf{h'},\mathbf{X}_i)=Q \left( \sqrt{\frac{\|\mathbf{h'}(\mathbf{X}_i-\mathbf{X}_j)\|^2}{2N_0}} \right).
\label{PEP1} 
\end{equation}
The PEP averaged over the channel statistics can be written as
\begin{equation}
\small
P(\mathbf{X}_i\rightarrow \mathbf{X}_j)=\mathbb{E}
 \left[ Q \left( \sqrt{\frac{\gamma\ \|\mathbf{h'}
(\mathbf{X}_i-\mathbf{X}_j)\|^2}{2}} \right) \right].
\label{PEP2}
\end{equation}
This can be simplified by writing 
$\|\mathbf{h'}(\mathbf{X}_i-\mathbf{X}_j)\|^2$ as
\begin{equation}
\|\mathbf{h'}(\mathbf{X}_i-\mathbf{X}_j)\|^2
=\mathbf{h'}(\mathbf{X}_i-\mathbf{X}_j)
(\mathbf{X}_i-\mathbf{X}_j)^H\mathbf{h'}^H.
\label{norm}
\end{equation}
The matrix $(\mathbf{X}_i-\mathbf{X}_j) (\mathbf{X}_i-\mathbf{X}_j)^H$ is 
Hermitian matrix that is diagonalizable by unitary transformation. Hence 
it can be written as
\begin{equation}
(\mathbf{X}_i-\mathbf{X}_j)(\mathbf{X}_i-
\mathbf{X}_j)^H=\mathbf{U\Lambda U}^H,
\label{svd}
\end{equation}
where $\mathbf{U}$ is unitary and 
$\mathbf{\Lambda}=\mbox{diag}\lbrace \lambda_1^2, \cdots \lambda_P^2 \rbrace$, 
$\lambda_i$ being $i$th singular value of the  difference matrix 
$\mathbf{\Delta}_{ij}$, given by 
$\mathbf{\Delta}_{ij} = (\mathbf{X}_i-\mathbf{X}_j)$. Substituting 
(\ref{svd}) in (\ref{norm}), and defining 
$\tilde{\bf h}^H=\mathbf{U}^H\mathbf{h'}^H$, (\ref{norm}) can be 
simplified as
\begin{equation}
\|\mathbf{h'}(\mathbf{X}_i-\mathbf{X}_j)\|^2
=\tilde{\bf h}\mathbf{\Lambda} \tilde{\bf h}^H 
= \sum \limits_{l=1}^{r}\lambda_l^2|\tilde{h}_l|^2,
\label{norm_simplified}
\end{equation}
where $r$ denotes the rank of the difference matrix $\mathbf{\Delta}_{ij}$. 
Substituting (\ref{norm_simplified}) in (\ref{PEP2}), the average PEP 
between symbol matrices $\mathbf{X}_i$ and $\mathbf{X}_j$ can be written as
\begin{equation}
P(\mathbf{X}_i\rightarrow \mathbf{X}_j) = \mathbb{E} \left[Q\left(\sqrt{\frac{\gamma \ \sum_{l=1}^r \lambda_l^2|\tilde{h}_l|^2}{2}}\right)\right].
\label{PEP3}
\end{equation}
Note that, since $\tilde{\mathbf{h}}$ is obtained by multiplying a 
unitary matrix to $\mathbf{h}'$, it has the same distribution as that 
of $\mathbf{h'}$. Therefore, $\tilde{h}_l$s are distributed as 
$\mathcal{CN}(0,1/P)$. Using this, the average PEP in (\ref{PEP3}) 
can be simplified to get the following upper bound on PEP \cite{DTse}
\begin{equation}
P(\mathbf{X}_i\rightarrow \mathbf{X}_j)  \leq \prod 
\limits_{l=1}^{r}\frac{1}{1+\ \dfrac{\gamma \lambda_l^2}{4P}}. 
\label{PEP4}
\end{equation}
At high SNRs, (\ref{PEP4}) can be further simplified as 
\begin{equation}
P(\mathbf{X}_i\rightarrow \mathbf{X}_j)  \leq \frac{1}{\gamma^r\prod 
\limits_{l=1}^{r} \dfrac{\lambda_l^2}{4P}}. 
\label{PEP5}
\end{equation}
From (\ref{PEP5}), it can be seen that the exponent of the SNR term 
$\gamma$ is $r$, which is equal to the rank of the difference matrix
$\mathbf{\Delta}_{ij}$. For all $i,j$, $i\neq j$, the PEP with the 
minimum value of $r$ dominates the overall BER. Therefore, the achieved 
diversity order, denoted by $\rho_{\tiny \mbox{siso-otfs}}$, is given by 
\begin{equation}
\rho_{\tiny \mbox{siso-otfs}} = \min_{i,j \ i\neq j}\ \mbox{rank}(\mathbf{\Delta}_{ij}).
\label{div_order}
\end{equation}
Now, consider a case when $x_i[k,l]=a$ and $x_j[k,l]=a'$, 
$\forall k=0,\cdots,N-1$ and $l=0,\cdots,M-1$. This  corresponds to 
the case when $\mathbf{X}_i=a.\mathbf{1}_{P\times MN}$ and $
\mathbf{X}_j=a'.\mathbf{1}_{P\times MN}$. Then, 
$\mathbf{\Delta}_{ij}=(\mathbf{X}_i-\mathbf{X}_j)=(a-a').\mathbf{1}_{P\times MN}$, 
whose rank is one, which is the minimum rank of $\mathbf{\Delta}_{ij}$,
$\forall i,j$, $i\neq j$. Hence, the asymptotic diversity order of OTFS 
with ML detection is one. 
$\quad \quad \quad \quad \quad \quad \quad \ \
\quad \quad \quad \quad \quad \quad \quad \ \
\quad \quad \quad \quad \quad \quad \quad \ \
\square$ 

From the above diversity analysis, it is evident that OTFS does not extract 
full diversity in the asymptotic regime and the asymptotic diversity order 
is equal to one{\footnote{We note that the above result 
on the asymptotic diversity order of OTFS holds even for the more general 
input-output relation which considers non-zero fractional delay and Doppler 
values. This result for the case of non-zero fractional delays and Dopplers 
is derived in Appendix \ref{app_a}.}}. 
However, using a lower bound on the average BER and 
simulation results, we show next that, under certain conditions, OTFS can 
achieve close to full diversity in the finite SNR regime. 

\vspace{-2mm}
\subsection{Lower bound on the average BER}
\label{sec3a}
In this subsection, we derive a lower bound on the BER of OTFS. This
lower bound, along with simulation results in the next subsection, 
provides insight into finite SNR diversity of OTFS. For the ease of 
exposition, we assume BPSK symbols. We obtain a lower bound on BER 
by summing the PEPs corresponding to all the  pairs $\mathbf{X}_i$ and 
$\mathbf{X}_j$, such that the difference matrix
$\mathbf{\Delta}_{ij}=(\mathbf{X}_i-\mathbf{X}_j)$ has rank equal to one.
With this, a lower bound on BER is given by
\begin{equation}
\mbox{BER}  \geq \frac{1}{2^{MN}}\sum 
\limits_{k=1}^{\kappa}P(\mathbf{X}_i\rightarrow
\mathbf{X}_{j}), 
\label{lb1}
\end{equation}
where $\kappa$ denotes the number of difference matrices 
($\mathbf{\Delta}_{ij}$s) having rank one. When $\mathbf{\Delta}_{ij}$ 
has rank one, it has only one non-zero singular value ($\lambda_1$), 
which can be  computed to be $\sqrt{4PMN}$. With this, the PEP  in 
(\ref{PEP3}), for the pair $(\mathbf{X}_i, \mathbf{X}_j$) with 
$\mathbf{\mathbf{\Delta}}_{ij}$ having rank one simplifies to
\begin{equation}
P(\mathbf{X}_i\rightarrow \mathbf{X}_{j})
=\mathbb{E} \left[ Q \left( \sqrt{2\gamma PMN
|\tilde{h}_1|^2} \right) \right].
\label{PEP6}
\end{equation}   
Since $\tilde{h}_1 \sim \mathcal{CN}(0,1/P)$, evaluating the expectation in 
(\ref{PEP6}) gives \cite{DTse}  
\begin{equation}
P(\mathbf{X}_i\rightarrow \mathbf{X}_{j}) = \frac{1}{2} \left( 1-\sqrt{\frac{MN}{MN+\gamma^{-1}}}\right).
\label{PEP7}
\end{equation}
Using (\ref{PEP7}) in (\ref{lb1}), we get the lower bound as
\begin{equation}
\mbox{BER}  \geq  \frac{\kappa}{2^{MN}} \ \frac{1}{2} \left( 1-\sqrt{\frac{MN}{MN+\gamma^{-1}}}\right).
\label{lb2}
\end{equation}
At high SNRs, this can be further simplified  as
\begin{equation}
\mbox{BER}  \geq    \frac{\kappa}{2^{MN}} \frac{1}{4\gamma MN}.
\label{lb3}
\end{equation}
Observe that  (\ref{lb3}) serves as diversity one lower bound on the 
average BER and its value depends on the ratio $\dfrac{\kappa}{2^{MN}}$. 
As the values $M$ and $N$ increase, the $2^{MN}$ term dominates the ratio 
$\dfrac{\kappa}{2^{MN}}$, and therefore increasing $M$ and $N$ can reduce the 
value of the lower bound in (\ref{lb3}). We will observe this behavior in
the simulation results presented in the next subsection. Further, we  will 
also see that the BER meets the lower bound at high SNR values. This means 
that the BER can decrease with a higher slope for higher values of $M$ 
and $N$ before it changes the slope and meets the diversity one lower 
bound of (\ref{lb3}).
\begin{table}
\begin{center}
\begin{tabular}{|l|l|} \hline
\textbf{Parameter} & \textbf{Value} \\ \hline
Carrier frequency (GHz) & 4 \\ \hline
Subcarrier spacing (kHz) &  3.75 \\ \hline
Number of paths ($P$) & 4 \\ \hline
Delay-Doppler profile ($\tau_i,\nu_i$)  & $(0,0), (0,\frac{1}{NT})$, \\ 
& 
$(\frac{1}{M\Delta f},0), (\frac{1}{M\Delta f}, \frac{1}{NT})$\\ \hline
Modulation scheme & BPSK \\ \hline
\end{tabular}
\caption{Simulation parameters}
\label{SimPar}
\vspace{-6mm}
\end{center}
\end{table}

\begin{table*}
{\scriptsize 
\begin{center}
\begin{tabular}{ |c|c|c|c| }
\hline
 $\mathbf{X}_i$ & $\mathbf{X}_{j}$
& $\mathbf{\Delta}_{ij}=(\mathbf{X}_i-\mathbf{X}_{j})$\\
\hline
 $\begin{bmatrix}-1 &-1&-1&-1\\ -1&-1&-1&-1\\ -1&-1&-1&-1\\ -1&-1&-1&-1  \end{bmatrix}$ & $\begin{bmatrix}1 &1&1&1\\ 1&1&1&1\\ 1&1&1&1\\ 1&1&1&1  \end{bmatrix}$ &$\begin{bmatrix}-2 &-2&-2&-2\\ -2&-2&-2&-2\\ -2&-2&-2&-2\\ -2&-2&-2&-2  \end{bmatrix}$ \\
 \hline
$\begin{bmatrix}1 &1&-1&-1\\ 1&1&-1&-1\\ -1&-1&1&1\\ -1&-1&1&1  \end{bmatrix}$ & $\begin{bmatrix} -1 &-1&1&1\\ -1&-1&1&1\\ 1&1&-1&-1\\ 1&1&-1&-1 \end{bmatrix}$ &$\begin{bmatrix}2 &2&-2&-2\\ 2&2&-2&-2\\ -2&-2&2&2\\ -2&-2&2&2  \end{bmatrix}$ \\
\hline
 
$\begin{bmatrix}-1 &1&1&-1\\ 1&-1&-1&1\\ 1&-1&-1&1\\ -1&1&1&-1  \end{bmatrix}$
& $\begin{bmatrix}1 &-1&-1&1\\ -1&1&1&-1\\ -1&1&1&-1\\ 1&-1&-1&1  \end{bmatrix}$ & $\begin{bmatrix}-2 &2&2&-2\\ 2&-2&-2&2\\ 2&-2&-2&2\\ -2&2&2&-2  \end{bmatrix}$\\
\hline

$\begin{bmatrix}1 &-1&1&-1\\ -1&1&-1&1\\ 1&-1&1&-1\\ -1&1&-1&1  \end{bmatrix}$
& $\begin{bmatrix}-1 &1&-1&1\\ 1&-1&1&-1\\ -1&1&-1&1\\ 1&-1&1&-1  \end{bmatrix}$ & $\begin{bmatrix}2&-2&2&-2\\ -2&2&-2&2\\ 2&-2&2&-2\\ -2&2&-2&2 \end{bmatrix}$\\
\hline

$\begin{bmatrix}1 &-1&-1&1\\ -1&1&1&-1\\ -1&1&1&-1\\ 1&-1&-1&1  \end{bmatrix}$
& $\begin{bmatrix}-1 &1&1&-1\\ 1&-1&-1&1\\ 1&-1&-1&1\\ -1&1&1&-1  \end{bmatrix}$ & $\begin{bmatrix}2 &-2&-2&2\\ -2&2&2&-2\\ -2&2&2&-2\\ 2&-2&-2&2   \end{bmatrix}$\\
\hline

$\begin{bmatrix}-1 &1&-1&1\\ 1&-1&1&-1\\ -1&1&-1&1\\ 1&-1&1&-1  \end{bmatrix}$
& $\begin{bmatrix}1 &-1&1&-1\\ -1&1&-1&1\\ 1&-1&1&-1\\ -1&1&-1&1  \end{bmatrix}$ & $\begin{bmatrix}-2 &2&-2&2\\ 2&-2&2&-2\\ -2&2&-2&2\\ 2&-2&2&-2  \end{bmatrix}$\\
\hline

$\begin{bmatrix}-1 &-1&1&1\\ -1&-1&1&1\\ 1&1&-1&-1\\ 1&1&-1&-1  \end{bmatrix}$
& $\begin{bmatrix}1 &1&-1&-1\\ 1&1&-1&-1\\ -1&-1&1&1\\ -1&-1&1&1  \end{bmatrix}$ & $\begin{bmatrix}-2 &-2&2&2\\ -2&-2&2&2\\ 2&2&-2&-2\\ 2&2&-2&-2  \end{bmatrix}$\\
\hline

$\begin{bmatrix}1 &1&1&1\\ 1&1&1&1\\ 1&1&1&1\\ 1&1&1&1  \end{bmatrix}$ & $\begin{bmatrix}-1 &-1&-1&-1\\ -1&-1&-1&-1\\ -1&-1&-1&-1\\ -1&-1&-1&-1  \end{bmatrix}$
& $\begin{bmatrix}2 &2&2&2\\ 2&2&2&2\\ 2&2&2&2\\ 2&2&2&2  \end{bmatrix}$\\
\hline
\end{tabular}
\vspace{2mm}
\caption{The pair of matrices $(\mathbf{X}_i,\mathbf{X}_j)$ and the 
corresponding $\mathbf{\Delta}_{ij}$ with rank equal to one in OTFS 
with $M=N=2$ with the delay-Doppler profile given in Table \ref{SimPar}.}
\label{error_vec}
\end{center}
\vspace{-8mm}
}
\end{table*}

\subsection{Simulation results}
\label{sec3b}
In this subsection, we present the BER performance of 
OTFS modulation with ML detection. Consider the vectorized input-output 
equation in (\ref{vecform}). At the receiver, the detection is carried 
out jointly over $MN$ channel uses, using the ML detection rule given by
\begin{equation}
\hat{\mathbf x}=\argmin_{\mathbf{x} \in \mathbb{A}^{MN}} \Vert\mathbf{y}-\mathbf{H}\mathbf{x} \Vert^2.
\label{ML}
\end{equation}
 
\noindent{\em Note on the choice of $M$ and $N$ in OTFS systems:}
The delay-Doppler plane where the modulation symbols reside is 
discretized to an information grid which can be denoted by $\Gamma$, 
given by
\begin{equation}
\Gamma=\lbrace (\tfrac{k}{NT},\tfrac{l}{M\Delta f}), k=0,1, \cdots,
N-1, l=0,1, \cdots, M-1\rbrace.
\end{equation}
Here, $1/NT$ and $1/M\Delta f$ represent the quantization steps of the 
Doppler shift and the delay, respectively. For a communication system
with a total bandwidth of $B=M\Delta f$, and a latency constraint of
$T_l=NT=N/\Delta f$, the maximum supportable Doppler is $(N-1)/NT$ and 
the maximum supportable delay is $(M-1)/M\Delta f$. The parameters 
$M$ and $N$ are chosen such that the system can support the maximum
delay $\tau_{\mbox{\scriptsize{max}}}$ and maximum Doppler 
$\nu_{\mbox{\scriptsize{max}}}$, among all the channel paths, i.e.,
$\Delta f < 1/\tau_{\mbox{\scriptsize{max}}}$ and $\Delta f >
\nu_{\mbox{\scriptsize{max}}}$. The following example provides an
illustration of the choice of $M$ and $N$. \\
{\em Example 1:} Suppose the maximum delay spread and Doppler spread
of the channel are $\tau_{\mbox{\scriptsize{max}}}=1 \ \mu$s and 
$\nu_{\mbox{\scriptsize{max}}}=1$ kHz, respectively. Also, let the
system bandwidth and latency constraint be $B=10$ MHz and $T_l=1$ ms,
respectively. Then, $\Delta f$ must be such that
$\nu_{\mbox{\scriptsize{max}}}<\Delta f < 1/\tau_{\mbox{\scriptsize{max}}}$,
i.e., 1 kHz $< \Delta f <$ 1 MHz. In this range, let us take $\Delta f$ to 
be 20 kHz. Then, $B=M\Delta f$ gives 
$M=\frac{B}{\Delta f}=\frac{10\times 10^6}{20\times 10^3}=50$. Likewise, 
$T_l=\frac{N}{\Delta f}$ gives 
$N=T_l\Delta f=1\times 10^{-3}\times 20\times 10^3=20$. So the choice of
$M$ and $N$ in this system is $(M,N)=(50,20)$.

Figure \ref{UPLB} shows the simulated BER performance of OTFS with $M=2$, 
$N=2$, and BPSK. The channel model is according to (\ref{sparsechannel}), 
and the number of taps is considered to be four (i.e., $P=4$). A carrier 
frequency of 4 GHz and a subcarrier spacing of 3.75 kHz are considered. 
Other parameters considered for the simulation are given in Table 
\ref{SimPar}. The path delays ($\tau_i$s) are chosen
such that $\tau_i=\frac{\alpha_i}{M\Delta f}$ and 
$\alpha_i \in \lbrace0,\cdots, M-1\rbrace$. Similarly, Doppler shifts 
($\nu_i$s) are chosen such that $\nu_i=\frac{\beta_i}{NT}$ and 
$\beta_i \in \lbrace0,\cdots N-1\rbrace$. The maximum Doppler shift 
considered is $1/NT =1.875$ kHz for $N=2$. This corresponds to a maximum 
speed of 506.25 km/h. In addition to the simulated BER plot, we have also 
plotted the lower bound of (\ref{lb3}) and the union bound based upper 
bound for the considered system. It can be seen from the figure that the 
simulated BER, the lower bound, and the upper bound almost coincide at 
high SNR values, which means that the bounds are tight in the high SNR 
regime. Further, it can be seen that, the simulated BER shows a higher 
diversity order in the low to medium SNR regime, before it changes the 
slope and meets the diversity one lower bound.

\begin{figure}
\centering
\includegraphics[width=9.5cm, height=6.5cm]{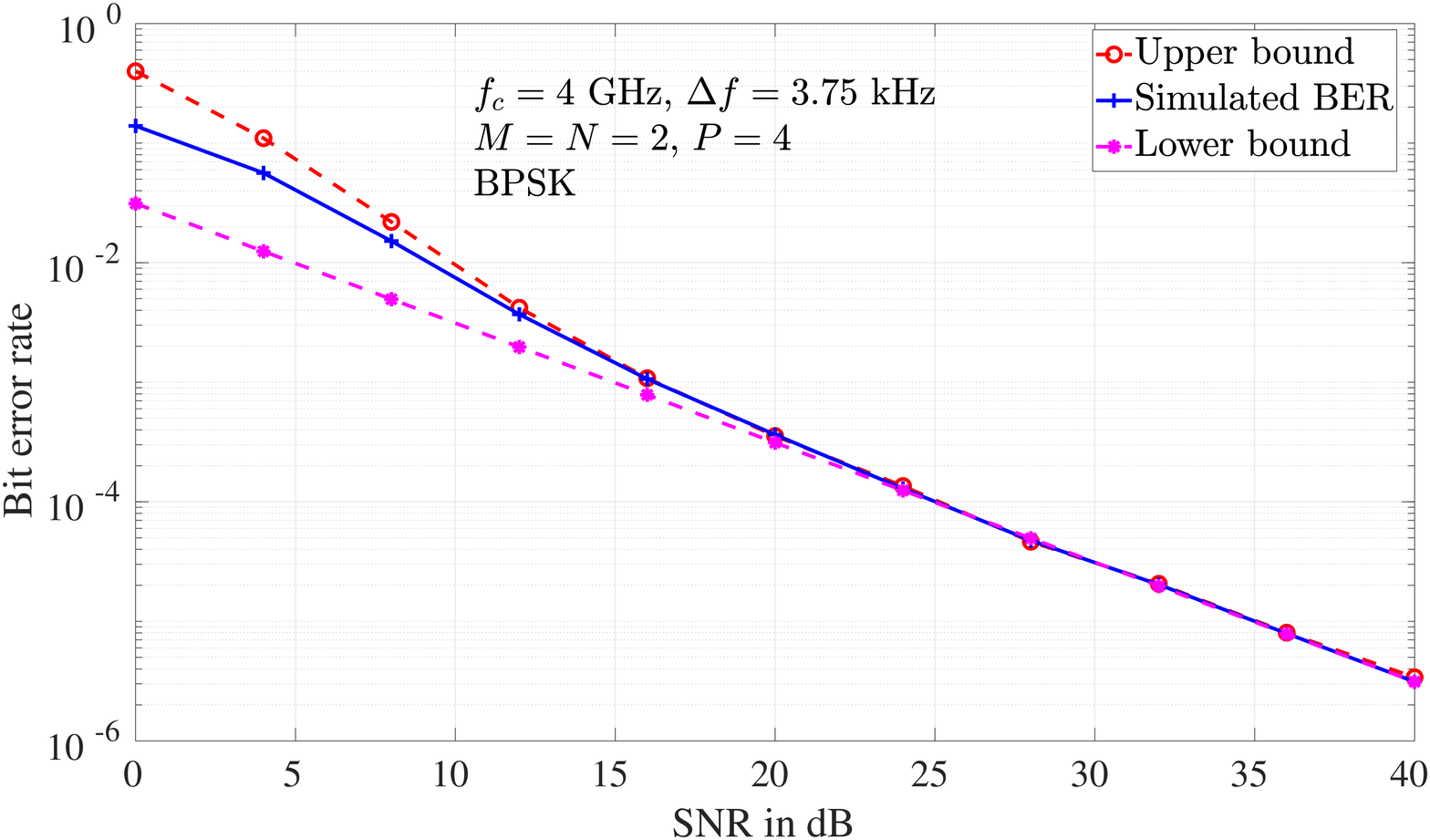}
\vspace{-2mm}
\caption{Upper bound, lower bound, and the simulated BER performance 
of OTFS with $M=N=2$ and $P=4$.}
\vspace{-6mm}
\label{UPLB}
\end{figure}

\begin{figure}
\centering
\includegraphics[width=9.5cm, height=6.5cm]{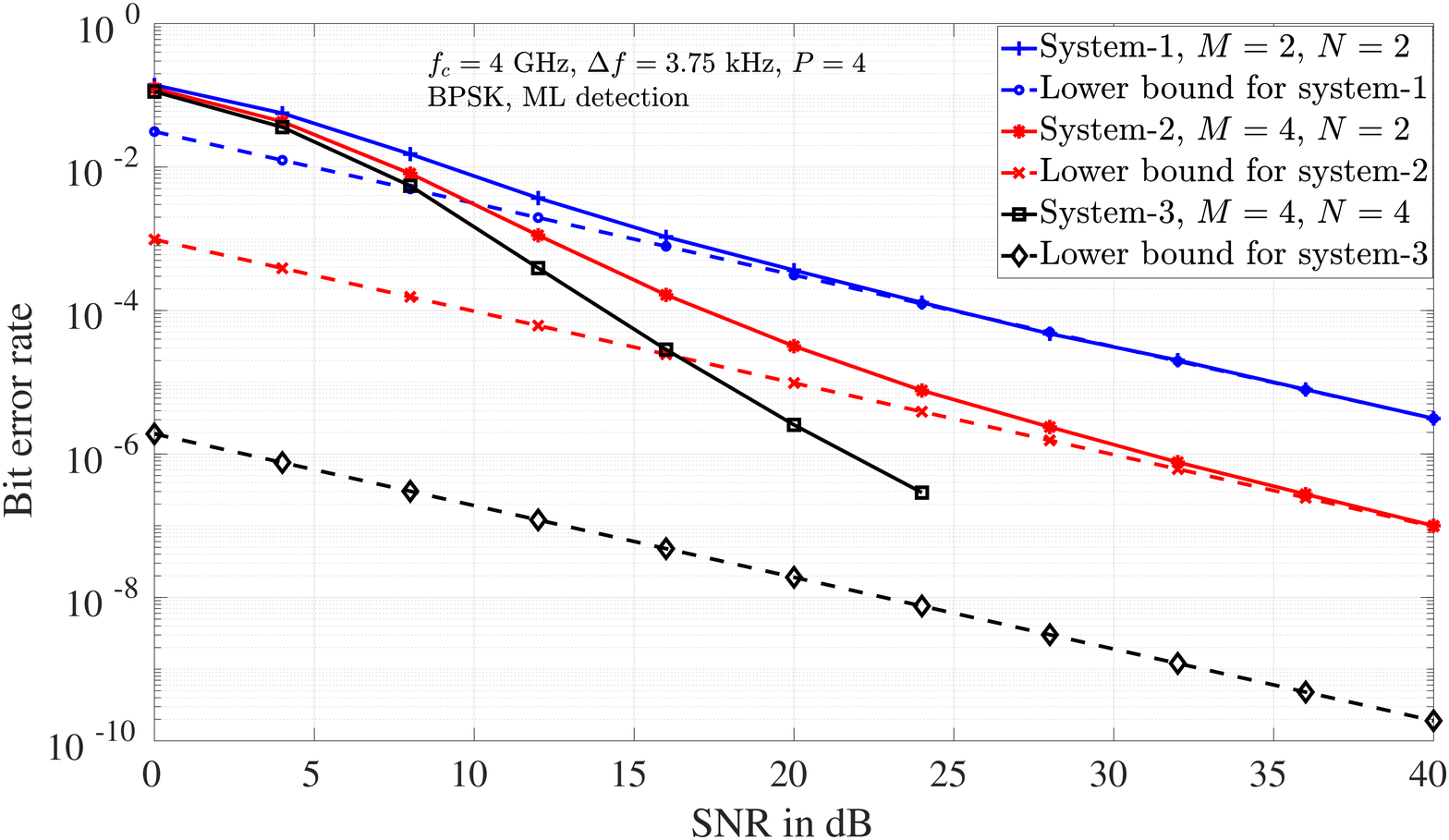}
\vspace{-2mm}
\caption{BER performance of OTFS for $i)$ $M=2$, $N=2$,
$ii)$ $M=4$, $N=2$, $iii)$ $M=4$, $N=4$.}
\vspace{-4mm}
\label{SISO_BER1}
\end{figure}

Figure \ref{SISO_BER1} shows the simulated BER performance of OTFS for 
different values of $M$ and $N$. We consider the following three systems: 
$i)$ system-1 with $M=N=2$, $ii)$ system-2 with $M=4$, $N=2$, and $iii)$ 
system-3 with $M=N=4$. All the three systems use BPSK. 
The maximum Doppler shift considered is $1/NT$. For system-1 and system-2,
which use $N=2$, the maximum Doppler shift is 1.875 kHz, which corresponds 
to a maximum speed of 506.25 km/h. Likewise, the maximum Doppler shift in 
the case of system-3 which uses $N=4$ is 938 kHz, corresponding to a 
maximum speed of 253.125 km/h at 4 GHz carrier frequency. The lower 
bounds given by (\ref{lb3}) for all the three systems are also plotted. 
We note that the $\frac{\kappa}{2^{MN}}$ values for the considered systems 
are $\frac{8}{16}$, $\frac{8}{256}$, and $\frac{8}{65536}$, respectively. 
For illustration purposes, the $\kappa=8$ pairs of matrices 
$(\mathbf{X}_i, \mathbf{X}_j)$ which result in rank one 
$\mathbf{\Delta}_{ij}$ matrices for the system with $M=N=2$ are 
given in Table \ref{error_vec}. Since the $\frac{\kappa}{2^{MN}}$ 
values are decreasing for increasing $MN$, (\ref{lb3}) indicates
that the lower bound for system-3 should lie below that of system-2,
which, in turn, should lie below that of system-1. This trend is clearly 
evident from Fig. \ref{SISO_BER1}. Further, as noted before, for all the 
systems, the BER plots show a diversity order greater than one for low to 
medium SNR values before it meets the diversity one lower bound. An 
interesting observation, however, is that the system-3 with higher
$M$ and $N$ values achieves a higher diversity order compared to those 
of systems-1 and 2 before meeting the lower bound. This is because the 
lower bound for system-3 lies much below the lower bounds for systems-1 
and 2, and the BER curve of system-3 falls with greater slope to meet its
lower bound. This shows that, though the asymptotic diversity order is one,
increasing the value of $MN$ (i.e., increasing the frame size) can lead to 
higher diversity order in the finite SNR regime, resulting in improved
performance for increased frame sizes.  

\begin{figure}
\centering
\includegraphics[width=9.5cm, height=6.5cm]{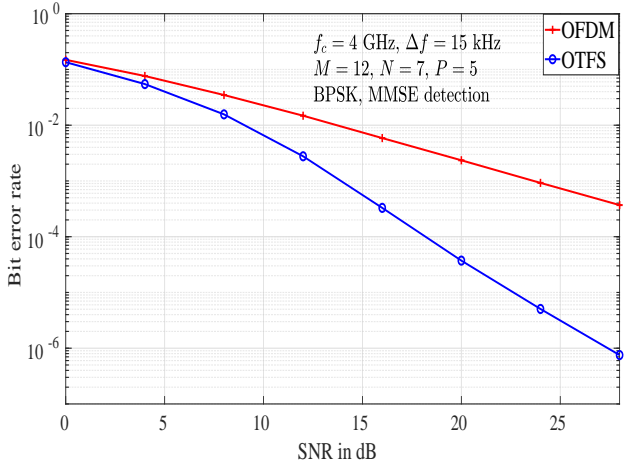}
\vspace{-4mm}
\caption{BER performance comparison of OTFS and OFDM systems 
with MMSE detection for $f_c=4$ GHz, $\Delta f =15$ kHz, $M=12$, $N=7$, 
$P=5$, and BPSK.}
\label{SISO_BER4}
\vspace{-5mm}
\end{figure}

\vspace{-2mm}
\subsection{Results for practical values of $M$ and $N$ in OTFS}
\label{sec3c}
In the previous subsection, we considered small systems with ML detection
to illustrate the asymptotic diversity order of OTFS modulation. We now 
present the performance of OTFS with practical values of $M$ and $N$. 
In Fig. \ref{SISO_BER4}, we present the BER performance of OTFS 
system with $M=12$ and $N=7$ (smallest resource block used in LTE). A 
carrier frequency of 4 GHz, a subcarrier spacing of 15 kHz, exponential 
power delay profile, and Jakes Doppler spectrum \cite{spectrum} are 
considered. Figure \ref{SISO_BER4} also shows the BER performance of 
OFDM system for comparison. Both the systems use minimum mean square 
error (MMSE) detection at the receiver. The maximum Doppler considered 
is 1.85 kHz, which corresponds to a speed of 500 km/h at 4 GHz carrier 
frequency. 
The Doppler shift corresponding the $i$th tap is generated using 
$\nu_i=\nu_{\mbox{\scriptsize{max}}}\cos(\theta_i)$,
where $\nu_{\mbox{\scriptsize{max}}}$ is the maximum Doppler shift and
$\theta_i$ is uniformly distributed over $[-\pi,\pi]$. From the figure, 
it can be seen that the performance of OTFS is significantly superior 
compared to the performance of OFDM. For example, OTFS achieves an SNR 
gain of about 4 dB and 9 dB compared to OFDM at a BER of $10^{-2}$ and 
$10^{-3}$, respectively.

\begin{figure}
\centering
\includegraphics[width=9.5cm, height=6.5cm]{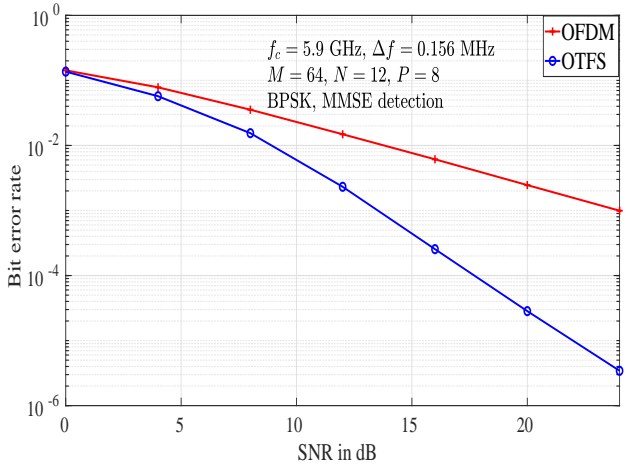}
\vspace{-4mm}
\caption{BER performance comparison of OTFS and OFDM systems 
with MMSE detection for $f_c=5.9$ GHz, $\Delta f =0.156$ MHz, $M=64$, $N=12$, 
$P=8$, and BPSK.}
\vspace{-5mm}
\label{SISO_BER4a}
\end{figure}

Next, in Fig. \ref{SISO_BER4a}, we compare the BER 
performance of OTFS and OFDM considering the system parameters 
according to the IEEE 802.11p standards, which is a standard for 
wireless access in vehicular environments (WAVE) \cite{ieee802.11p}. 
A carrier frequency of 5.9 GHz, a subcarrier frequency of 0.156 MHz,
a frame size $(M,N)=(64,12)$, number of paths $P=8$, a maximum speed 
of 220 km/h, and BPSK are considered. In this WAVE system setting also, 
we observe that the performance of OTFS is significantly better compared 
to that of OFDM. For example, OTFS achieves an SNR gain of about 5 dB and 
10 dB compared to OFDM at a BER of $10^{-2}$ and $10^{-3}$, respectively.
Note that, for the values of $M$, $N$ used in practice
(e.g., $M=12$, $N=7$ in LTE and $M=64$, $N=12$ in IEEE 802.11p) and ML
detection, the transition of the BER slope to diversity one will take
place at very high SNR values.
In coded systems, this uncoded BER performance advantage in OTFS can allow 
the use of high rate codes (e.g., rate 3/4, 7/8) in OTFS systems to achieve 
a given coded BER performance.
A performance comparison 
between OTFS and OFDM in coded settings is presented in Fig. 3 of 
\cite{iter_dec_otfs}, where it is shown that OTFS achieves better
performance compared to OFDM in coded settings as well.

\section{Phase rotation for full diversity in OTFS}
\label{sec4}
In the previous section, we showed that the asymptotic diversity of 
OTFS is one, and that potential for higher diversity orders is observed 
in the finite SNR regime for large frame sizes before the diversity
one regime takes over. In this section, we propose a `phase rotation' 
scheme which extracts the full diversity offered by the delay-Doppler 
channel. From the diversity analysis in Sec. \ref{sec3}, it is clear that 
the asymptotic diversity order of OTFS depends on the minimum rank of the 
difference matrix $\mathbf{\Delta}_{ij} =(\mathbf{X}_i-\mathbf{X}_{j})$,
over all pairs of symbol matrices $(\mathbf{X}_i, \mathbf{X}_j)$. In 
order to design a scheme that can extract full diversity, we take a 
closer look at the symbol matrix $\mathbf{X}$ whose $i$th column is
given by ($\ref{X_mat}$). The symbol matrix $\mathbf{X}$ is a 
$P\times MN$ matrix  that has only $MN$ unique entries, which are 
nothing but the $MN$ symbols of the transmit vector $\mathbf{x}$.
The rows of the matrix $\mathbf{X}$ are the permutations of the 
transmit symbol vector $\mathbf{x}$. When $P=MN$, the matrix 
$\mathbf{X}$ is a block circulant matrix with circulant blocks, as 
shown in (\ref{xstruct}). Note that $\mathbf{X}$ has $M$ circulant 
blocks, each of size $N\times N$, which are cyclically shifted to form a
block circulant matrix. In (\ref{xstruct}), $x_q^{(l)}$ denotes the 
$q$th distinct element of the $l$th block, where $q=0,\cdots,N-1$ and 
$l=0,\cdots,M-1$. When $P<MN$, the $P$ rows of the matrix $\mathbf{X}$
are a subset of rows from (\ref{xstruct}), and the selected subset 
depends on the positions of non-zero entries in the delay-Doppler 
channel matrix. This structure of $\mathbf{X}$ arises naturally from
OTFS pre- and post-processing operations (ISFFT and SFFT) which result 
in the 2D circular convolution of the  transmit vector $\mathbf{x}$ 
with the channel response in delay-Doppler domain.

\begin{figure*}[t]
{\footnotesize 
\begin{equation}
\mathbf{X}=
\left[
\begin{array}{c|c|c|c}
\begin{array}{cccc} x_0^{(0)}& x_1^{(0)} &\cdots &x_{N-1}^{(0)}  \\
          x_{N-1}^{(0)}& x_0^{(0)}&\cdots &x_{N-2}^{(0)}\\
          & &\vdots & \\
          x_1^{(0)}&x_{2}^{(0)}&\cdots & x_0^{(0)}
 \end{array}
   &
\begin{array}{cccc} x_0^{(1)}& 
&\hspace{-6mm} \cdots &x_{N-1}^{(1)}  \\
          x_{N-1}^{(1)}& 
&\hspace{-6mm} \cdots &x_{N-2}^{(1)}\\
          & & \hspace{-6mm} \vdots & \\
          x_1^{(1)}& 
&\hspace{-6mm} \cdots & x_0^{(1)}
 \end{array}
& \cdots 
&
\begin{array}{cccc} x_0^{(M-1)}& 
&\hspace{-6mm} \cdots &x_{N-1}^{(M-1)}  \\
          x_{N-1}^{(M-1)}& 
&\hspace{-6mm} \cdots &x_{N-2}^{(M-1)}\\
          & &\hspace{-6mm} \vdots & \\
          x_1^{(M-1)}& 
&\hspace{-6mm} \cdots & x_0^{(M-1)}
 \end{array}
 \\
\hline
{\scriptsize 
 \begin{array}{cccc} x_0^{(M-1)}& x_1^{(M-1)} &\cdots &x_{N-1}^{(M-1)}  \\
          x_{N-1}^{(M-1)}& x_0^{(M-1)}&\cdots &x_{N-2}^{(M-1)}\\
          & &\vdots & \\
          x_1^{(M-1)}&x_2^{(M-1)}&\cdots & x_0^{(M-1)}
 \end{array}
}
&
\begin{array}{cccc} x_0^{(0)}& 
&\hspace{-6mm} \cdots &x_{N-1}^{(0)}  \\
          x_{N-1}^{(0)}& 
&\hspace{-6mm} \cdots &x_{N-2}^{(0)}\\
          & &\hspace{-6mm} \vdots & \\
          x_1^{(0)}& 
&\hspace{-6mm} \cdots & x_0^{(0)}
\end{array}
&
\cdots 
&
\begin{array}{cccc} x_0^{(M-2)}& 
&\hspace{-6mm} \cdots &x_{N-1}^{(M-2)}  \\
          x_{N-1}^{(M-2)}& 
&\hspace{-6mm} \cdots &x_{N-2}^{(M-2)}\\
          & &\hspace{-6mm} \vdots & \\
          x_1^{(M-2)}& 
&\hspace{-6mm} \cdots & x_0^{(M-2)}
 \end{array}
\\
\hline
 \vdots &  \vdots  & \vdots & \vdots
 \\
 \hline
 \begin{array}{cccc} x_0^{(1)}& x_1^{(1)} &\cdots &x_{N-1}^{(1)}  \\
          x_{N-1}^{(1)}& x_0^{(1)}&\cdots &x_{N-2}^{(1)}\\
          & & \vdots & \\
          x_1^{(1)}&x_2^{(1)}&\cdots & x_0^{(1)}
 \end{array}
&
\begin{array}{cccc} x_0^{(2)}& 
&\hspace{-6mm} \cdots &x_{N-1}^{(2)}  \\
          x_{N-1}^{(2)}& 
&\hspace{-6mm} \cdots &x_{N-2}^{(2)}\\
          & & \hspace{-6mm} \vdots & \\
          x_1^{(2)}& 
&\hspace{-6mm} \cdots & x_0^{(2)}
 \end{array}
& \cdots 
&
\begin{array}{cccc} x_0^{(0)}& 
&\hspace{-6mm} \cdots &x_{N-1}^{(0)}  \\
           x_{N-1}^{(0)}& 
&\hspace{-6mm} \cdots &x_{N-2}^{(0)}\\
           & &\hspace{-6mm} \vdots & \\
           x_1^{(0)}& 
&\hspace{-6mm} \cdots & x_0^{(0)}
 \end{array}
\end{array}
\right].
\label{xstruct}
\end{equation}
}
\vspace{-4mm}
\end{figure*}

The $MN\times 1$ OTFS transmit vector corresponding to symbol matrix 
$\mathbf{X}$ in (\ref{xstruct}) is given by
\vspace{-2mm }
\begin{equation}
\mathbf{x}=[x_0^{(0)},\cdots,x_{N-1}^{(0)},x_0^{(1)},\cdots,x_{N-1}^{(1)}, \cdots,x_{N-1}^{(M-1)}]^T.
\vspace{-2mm}
\label{tx_vec_pr}
\end{equation}
The following theorem shows that multiplying the OTFS transmit vector in
(\ref{tx_vec_pr}) by a diagonal phase rotation matrix $\mathbf{\Phi}$ with 
distinct transcendental numbers results in full diversity.

\begin{theorem}
\label{th1}
Let
\vspace{-2mm}
\begin{equation} 
\hspace{-2mm}
\mathbf{\Phi}= \textnormal{diag}
\left\{ \phi_0^{(0)},\cdots,\phi_{N-1}^{(0)},\phi_0^{(1)},\cdots,\phi_{N-1}^{(1)},\cdots,\phi_{N-1}^{(M-1)}\right\}
\label{phase_rot}
\end{equation} 
be the phase rotation matrix and
\begin{equation}
\mathbf{x}' = \mathbf{\Phi} \mathbf{x} = \begin{bmatrix} \phi_0^{(0)}x_0^{(0)} \\ \vdots \\ \phi_{N-1}^{(0)} x_{N-1}^{(0)} \\  \phi_0^{(1)} x_0^{(1)} \\ \vdots\\ \phi_{N-1}^{(1)} x_{N-1}^{(1)}\\ \vdots \\ \phi_{N-1}^{(M-1)}x_{N-1}^{(M-1)}
\end{bmatrix}
\end{equation} 
be the phase rotated OTFS transmit vector. OTFS with the above phase 
rotation achieves the full diversity of $P$ when $\phi_q^{(l)}=e^{ja_q^{(l)}}$,
$q=0,\cdots,N-1$, $l=0,\cdots, M-1$ are transcendental numbers with 
$a_q^{(l)}$ real, distinct, and algebraic.
\end{theorem}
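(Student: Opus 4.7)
The plan is to show that $\mathrm{rank}(\mathbf{\Delta}_{ij})=P$ for every pair of distinct symbol vectors $\mathbf{x}^{(i)}\neq \mathbf{x}^{(j)}$; by (\ref{div_order}) this yields asymptotic diversity order $P$. After the phase rotation, the $(p,k+Nl)$-entry of $\mathbf{\Delta}_{ij}$ takes the explicit form $\phi_{m_p(i)}\,d_{m_p(i)}$, where $m_p(i)\triangleq(k-\beta_p)_N + N(l-\alpha_p)_M$ with $i=k+Nl$, $d_m\triangleq[\mathbf{x}^{(i)}-\mathbf{x}^{(j)}]_m$, and $\phi_m$ is the diagonal entry of $\mathbf{\Phi}$ corresponding to the single index $m=q+Nl$. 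In particular, each row of $\mathbf{\Delta}_{ij}$ is a 2D cyclic shift of the single vector $\mathbf{\Phi}\mathbf{d}$, with shift parameters $(\beta_p,\alpha_p)$.

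First I would exhibit one $P\times P$ submatrix whose determinant is provably non-zero. Since $\mathbf{d}\neq \mathbf{0}$, fix any index $m^*$ with $d_{m^*}\neq 0$, and for each path $p$ let $i_p$ be the unique column satisfying $m_p(i_p)=m^*$; distinctness of the delay--Doppler pairs $(\alpha_p,\beta_p)$ across paths guarantees that $i_1,\dots,i_P$ are pairwise distinct. The $P\times P$ submatrix $\mathbf{S}$ on columns $i_1,\dots,i_P$ therefore has every diagonal entry equal to $\phi_{m^*}d_{m^*}$. Expanding $\det \mathbf{S}=\sum_{\sigma\in S_P}\mathrm{sgn}(\sigma)\prod_{p=1}^P \phi_{m_p(i_{\sigma(p)})}d_{m_p(i_{\sigma(p)})}$ and grouping terms according to the resulting monomial in the phases, the pure monomial $\phi_{m^*}^P$ can arise only for $\sigma=\mathrm{id}$---since $m_p(i_{\sigma(p)})=m^*$ for every $p$ forces $i_{\sigma(p)}=i_p$ and hence $\sigma(p)=p$---carrying the algebraic coefficient $d_{m^*}^P\neq 0$.

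The key step is then the Lindemann--Weierstrass theorem: distinct algebraic $\alpha_1,\dots,\alpha_n$ imply that $e^{\alpha_1},\dots,e^{\alpha_n}$ are linearly independent over the field $\overline{\mathbb{Q}}$ of algebraic numbers. Each distinct monomial in the grouped expansion of $\det \mathbf{S}$ is of the form $\prod_m \phi_m^{n(m)}=\exp\!\left(j\sum_m n(m)a_m\right)$, and its coefficient is a polynomial in the $d_m$'s, which lies in $\overline{\mathbb{Q}}$ because QAM/PSK alphabet symbols are algebraic. Hence $\det \mathbf{S}=0$ would force every grouped coefficient to vanish simultaneously, contradicting the non-zero $d_{m^*}^P$ coefficient of $\phi_{m^*}^P$. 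Thus $\det \mathbf{S}\neq 0$, $\mathrm{rank}(\mathbf{\Delta}_{ij})=P$ for every non-zero $\mathbf{d}$, and full diversity $P$ is achieved.

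The hard part is making the Lindemann--Weierstrass invocation fully rigorous: one needs the exponent sums $\sum_m n(m)a_m$ attached to the distinct monomials appearing in $\det \mathbf{S}$ to be themselves pairwise distinct algebraic numbers. This is exactly where the hypothesis on the $a_q^{(l)}$ is used in its strongest form---beyond being distinct algebraic reals, they must be chosen so that no two multi-indices $\mathbf{n}\neq \mathbf{n}'$ of total weight $P$ yield the same $\mathbb{Q}$-linear combination $\sum_m n(m)a_m$. Taking the $a_q^{(l)}$'s to be $\mathbb{Q}$-linearly independent (for example, $a_q^{(l)}=\sqrt{p_{q,l}}$ with distinct primes $p_{q,l}$) makes this automatic; the identity permutation is then the unique contributor to the monomial $\phi_{m^*}^P$, and the Lindemann--Weierstrass step closes the argument.
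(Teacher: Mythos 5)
Your overall strategy (show every difference matrix $\mathbf{\Delta}'_{ij}$ has rank $P$, then invoke the PEP rank criterion) matches the paper, but your route to full rank is genuinely different and, as written, has a gap that you yourself flag: the Lindemann--Weierstrass step does not go through under the theorem's stated hypotheses. Expanding $\det\mathbf{S}$ over permutations produces \emph{degree-$P$} monomials $\prod_m \phi_m^{n(m)}=e^{j\sum_m n(m)a_m}$, and Lindemann--Weierstrass only forces the individual grouped coefficients to vanish if the exponents $j\sum_m n(m)a_m$ attached to distinct multi-indices $\mathbf{n}$ are pairwise distinct. Mere distinctness of the $a_q^{(l)}$ does not give this (e.g.\ $a_1+a_3=2a_2$ for $a_m=m$), so two formal monomials can collide and their algebraic coefficients can cancel against your $d_{m^*}^P$ term. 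Your proposed repair --- taking the $a_q^{(l)}$ to be $\mathbb{Q}$-linearly independent, e.g.\ square roots of distinct primes --- closes the argument but proves a strictly narrower statement than the theorem, which only assumes the $a_q^{(l)}$ are real, distinct, and algebraic. In particular, the strengthened hypothesis excludes the phase matrix $\mathbf{\Phi}=\mathrm{diag}\{1,e^{j\frac{1}{MN}},\cdots,e^{j\frac{MN-1}{MN}}\}$ actually used in the paper's simulations, whose exponents $0,\tfrac{1}{MN},\ldots,\tfrac{MN-1}{MN}$ are rational and hence maximally $\mathbb{Q}$-dependent.

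The paper sidesteps this entirely by exploiting structure you do not use: for $P=MN$ the difference matrix is block circulant with circulant blocks, hence diagonalized by $\mathbf{F}_M\otimes\mathbf{F}_N$, and each eigenvalue is the \emph{linear} form $\mu_k=\sum_{l,q}\big(\delta_q^{(l)}e^{-j2\pi uq/N}\omega^{vl}\big)e^{ja_q^{(l)}}$ with algebraic coefficients (differences of alphabet symbols times roots of unity). For a degree-one combination, distinctness of the algebraic exponents $ja_q^{(l)}$ is exactly what Lindemann's theorem needs, so every $\mu_k\neq 0$ and the matrix is full rank; the case $P<MN$ follows by row selection. So the decisive idea you are missing is not a sharper transcendence tool but the circulant diagonalization that reduces the non-vanishing question from a degree-$P$ determinant to $MN$ independent linear exponential sums. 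If you wish to keep your submatrix/determinant framework, you must either import that diagonalization or accept the stronger $\mathbb{Q}$-linear-independence hypothesis, which changes the theorem.
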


\begin{proof}
Let $\mathbf{x}'_i=\mathbf{\Phi x}_i$ and $\mathbf{x}'_j=\mathbf{\Phi x}_j$
be two phase rotated OTFS transmit vectors. Let 
$\mathbf{X}'_i$ and $\mathbf{X}'_j$ denote the corresponding phase rotated 
symbol matrices. \\
{\em Case 1: $P=MN$}

When $P=MN$, the symbol matrices $\mathbf{X}'_i$ and $\mathbf{X}'_j$
are block circulant with circulant blocks, and hence
$\mathbf{\Delta}'_{ij}=\mathbf{X}'_i-\mathbf{X}'_j$ also has the
same structure, i.e., 
\begin{equation}
\mathbf{\Delta}'_{ij}=
\left[
\begin{array}{c|c|c|c}
\mathbf{\Delta'}_{ij}^{(0)} & \mathbf{\Delta'}_{ij}^{(1)} & \dots & \mathbf{\Delta'}_{ij}^{(M-1)}\\
\hline
\mathbf{\Delta'}_{ij}^{(M-1)} & \mathbf{\Delta'}_{ij}^{(0)} & \dots & \mathbf{\Delta'}_{ij}^{(M-2)}\\
\hline
\vdots & \vdots & \vdots & \vdots \\
\hline
\mathbf{\Delta'}_{ij}^{(1)} & \mathbf{\Delta'}_{ij}^{(2)} & \dots & \mathbf{\Delta'}_{ij}^{(0)}\\
\end{array}
\right],
\label{Deltaij}
\end{equation}
where
\begin{equation}
\mathbf{\Delta'}_{ij}^{(l)} =
\left[
\begin{array}{cccc}
\delta_0^{(l)}\phi_0^{(l)} & \delta_1^{(l)}\phi_1^{(l)} & \dots & \delta_{N-1}^{(l)}\phi_{N-1}^{(l)}\\
\delta_{N-1}^{(l)}\phi_{N-1}^{(l)} & \delta_0^{(l)}\phi_0^{(l)} & \dots & \delta_{N-2}^{(l)}\phi_{N-2}^{(l)}\\
& \vdots & \vdots &\\
\delta_1^{(l)}\phi_1^{(l)} & \delta_2^{(l)}\phi_2^{(l)} & \dots & \delta_0^{(l)}\phi_0^{(l)}\\
\end{array}
\right],
\end{equation}
where $\delta_q^{(l)}=x_{i,q}^{(l)}-x_{j,q}^{(l)}$, with $x_{i,q}^{(l)}$ 
and $x_{j,q}^{(l)}$ being $q$th distinct elements in the $l$th block of 
$\mathbf{X}_i$ and $\mathbf{X}_j$, respectively. Since 
$\mathbf{\Delta'}_{ij}$ is block circulant with circulant blocks, it is 
diagonalized by $\mathbf{F}_M \otimes \mathbf{F}_N$, where $\mathbf{F}_M$
and $\mathbf{F}_N$ denote the $M \times M$ and $N \times N$ DFT matrices 
and $\otimes$  denotes the Kronecker product. Therefore, 
$\mathbf{\Delta'}_{ij}$ is given by \cite{davis}
\begin{equation}
\mathbf{\Delta'}_{ij} = (\mathbf{F}_M \otimes \mathbf{F}
_N)^H\mathbf{D}(\mathbf{F}_M \otimes \mathbf{F}_N),
\end{equation}
where $\mathbf{D}$ is an $MN \times MN $ diagonal matrix whose entries 
are eigen values of $\mathbf{\Delta'}_{ij}$, given by
\vspace{-2mm}
\begin{equation}
\mathbf{D} = \sum_{l=0}^{M-1} \mathbf{\Omega}_M^{l} \otimes \mathbf{\Lambda}^{(l)},
\label{diagD}
\end{equation}
where $\mathbf{\Omega}_M=\mbox{diag}\{1,\omega,\omega ^2,\cdots,\omega^{M-1}\}$
with $\omega = e^{j2\pi/M}$, and $\mathbf{\Lambda}^{(l)}$ is a diagonal matrix 
whose entries are the eigen values of $\mathbf{\Delta'}_{ij}^{(l)}$.
Let $\lambda_q^{(l)}$ denote the $q$th eigen value of 
$\mathbf{\Delta'}_{ij}^{(l)}$ and $\mu_0,\mu_1,\cdots,\mu_{MN-1}$ denote 
the eigen values of $\mathbf{\Delta'}_{ij}$. From (\ref{diagD}), the $k$th 
eigen value of $\mathbf{\Delta'}_{ij}$, that is, $\mu_k$ is given by
\begin{equation}
\mu_k = \sum_{l=0}^{M-1} \lambda_u^{(l)} \omega^{vl},
\label{mu_k}
\end{equation}
where $u=k-\lfloor \frac{k}{N} \rfloor N$, 
$v=\lfloor \frac{k}{N} \rfloor$, and $\lambda_u^{(l)}$ is the $u$th 
eigen value of $\mathbf{\Delta'}_{ij}^{(l)}$, given by
\begin{equation}
\lambda_u^{(l)} = \sum_{q=0}^{N-1} \phi_q^{(l)} \delta_q^{(l)}e^{-j2\pi uq/N }.
\label{lambda_ul}
\end{equation}
Using (\ref{lambda_ul}) in (\ref{mu_k}), we have
\begin{eqnarray}
\mu_k &= &\sum_{l=0}^{M-1} \sum_{q=0}^{N-1} \phi_q^{(l)} \delta_q^{(l)}e^{-j2\pi uq/N } \omega^{vl} \nonumber \\
&= &\sum_{l=0}^{M-1} \bigg\{ \phi_0^{(l)}\delta_0^{(l)} + \phi_1^{(l)}\delta_1^{(l)}e^{-j2\pi u/N} +  \nonumber \\
& & 
\cdots + \phi_{N-1}^{(l)}\delta_{N-1}^{(l)}e^{-j2\pi u(N-1)/N} \bigg\} \omega^{vl},
\end{eqnarray}
which can be further simplified as
\begin{eqnarray}
\mu_k & \hspace{-3mm} = \hspace{-3mm} &  \phi_0^{(0)}\delta_0^{(0)} + \phi_1^{(0)}\delta_1^{(0)}e^{-\frac{j2\pi u}{N}} + \nonumber \\
& & \cdots + \phi_{N-1}^{(0)} \delta_{N-1}^{(0)}e^{-\frac{j2\pi u(N-1)}{N} } + \ \phi_0^{(1)}\delta_0^{(1)}\omega^v  + \nonumber \\
& & \cdots +  \phi_{N-1}^{(1)}\delta_{N-1}^{(1)}e^{-\frac{j2\pi u(N-1)}{N}} \omega^v + \nonumber \\\
& & \cdots + \phi_0^{(M-1)}\delta_0^{(M-1)}\omega^{v(M-1)}  + \nonumber \\
& & \cdots + \ \phi_{N-1}^{(M-1)}\delta_{N-1}^{(M-1)}e^{-\frac{j2\pi u(N-1)}{N}} \omega^{v(M-1)}.
\label{expansion}
\end{eqnarray}
At this stage, we invoke the Lindenmann's theorem \cite{stbc_numthry}, 
which states that, if $a_1,a_2,\cdots,a_m$ are distinct algebraic numbers, 
and if $c_1,c_2,\cdots,c_m$ are algebraic and not all equal to zero, then
\vspace{-1mm}
\begin{equation}
c_1e^{a_1} + c_2e^{a_2}+ \cdots + c_m e^{a_m} \neq 0.
\label{lind}
\vspace{-1mm}
\end{equation}
It should be noted from (\ref{expansion}) that the terms of the form 
$\delta_q^{(l)}e^{-\frac{j2\pi uq}{N}} \omega^{vl}$ are all algebraic 
\cite{stbc_numthry}. Therefore, comparing (\ref{lind}) and 
(\ref{expansion}), if the terms  $\phi_q^{(l)}$ are chosen such that 
$\phi_q^{(l)}=e^{ja_q^{(l)}}$ are transcendental with $a_q^{(l)}$ 
real, distinct, and algebraic, then $\mu_k$ can not be zero. Since $\mu_k$s 
are eigen values of $\mathbf{\Delta'}_{ij}$, choosing the diagonal phase 
rotation matrix with entries $\phi_q^{(l)}=e^{ja_q^{(l)}}$ being 
transcendental with $a_q^{(l)}$ real, distinct, and algebraic 
ensures that all the eigen values of $\mathbf{\Delta'}_{ij}$ are non-zero,
making it full rank (i.e., rank $P$). Since this is true for 
$\mathbf{\Delta'}_{ij}$ for all $(i,j)$, $i\neq j$, the minimum rank 
of $\mathbf{\Delta'}_{ij}$ is equal to $MN$. Hence, from (\ref{div_order}), 
the achieved diversity order of OTFS with the proposed phase rotation is 
$MN$. \\
{\em Case 2: $P<MN$} 

Now, consider the case when $P<MN$. As mentioned previously, when $P<MN$,
the rows of the transmit symbol matrix $\mathbf{X}$ is the subset of rows 
from the corresponding $MN \times MN $ matrix in (\ref{xstruct}). If 
$\mathbf{X}'_i$ and $\mathbf{X}'_j$ are two phase rotated symbol matrices 
with $P<MN$, then the rows of 
$\mathbf{\Delta}'_{ij}=\mathbf{X}'_i-\mathbf{X}'_j$ form a subset of the 
rows of the corresponding $MN\times MN$ matrix in (\ref{Deltaij}). Since 
the matrix  in (\ref{Deltaij}) is shown to be full rank in {\em Case 1}, 
$\mathbf{\Delta}'_{ij}$ with $P<MN$ should have a rank equal to $P$. 
Therefore, OTFS with phase rotation using transcendental numbers
of the form $\phi_q^{(l)}=e^{ja_q^{(l)}}$ with $a_q^{(l)}$ 
real, distinct, and algebraic achieves the full diversity of $P$ in the 
delay-Doppler domain.
\end{proof}

\vspace{-2mm}
\subsection{Simulation results}
\label{sec4a}
Figure \ref{SISO_BER5} shows the simulated BER performance of OTFS without 
and with phase rotation for $i)$ system-1 with $M=N=2$, $ii)$ system-2 with 
$M=4$, $N=2$, and $iii)$ system-3 with $M=N=4$. The carrier frequency and 
the subcarrier spacing used are 4 GHz and 3.75 kHz, respectively. All the 
systems use BPSK. Other simulation parameters are as given in Table 
\ref{SimPar}. For the simulations, all the three systems use the phase 
rotation matrix, 
$\mathbf{\Phi}= \ \textnormal{diag}\{{ 1,e^{j\frac{1}{MN}} \cdots e^{j\frac{MN-1}{MN}} }\}$. 
From Fig. \ref{SISO_BER5}, we observe that the asymptotic diversity order 
of all the three systems without phase rotation is one. Further, the OTFS
systems with phase rotation exhibit full diversity in the high SNR regime. 
Although all the systems with phase rotation exhibit a diversity order of 
$P=4$, we observe a slight difference in the BER performance 
of the three systems. This is because of the different coding gains achieved 
by each system. While the proposed phase rotation scheme achieves full 
delay-Doppler diversity, the coding gain achieved by a system can be 
improved by optimizing the phases used in the phase rotation matrix 
\cite{stbc_numthry}. 

\begin{figure}
\centering
\includegraphics[width=9.5cm, height=6.5cm]{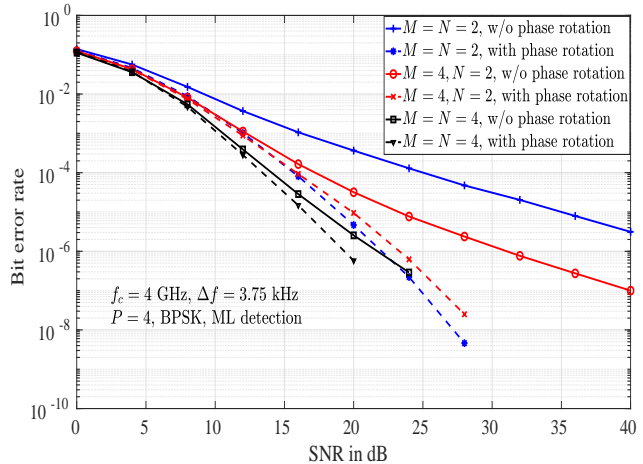}
\vspace{-2mm}
\caption{BER performance of OTFS without and with phase 
rotation for $i)$ $M=N=2$, $ii)$ $M=4$, $N=2$, and $iii)$ $M=N=4$, 
and BPSK.}
\vspace{-4mm}
\label{SISO_BER5}
\end{figure}

In Fig. \ref{SISO_BER5x}, we present the simulated BER performance of OTFS 
with and without phase rotation for a system with $M=N=2$ and 8-QAM. The 
carrier frequency and the subcarrier spacing used are 4 GHz and 3.75
kHz, respectively. Other simulation parameters are as given in Table 
\ref{SimPar}.  For the simulations, the phase rotation matrix, 
$\mathbf{\Phi}=\textnormal{diag} \{{ 1,e^{j\frac{1}{MN}} \cdots e^{j\frac{MN-1}{MN}} }\}$ 
is used. From Fig. \ref{SISO_BER5x}, we observe that OTFS without phase 
rotation achieves a diversity order of one. Whereas, OTFS with phase rotation 
shows the intended diversity benefit. For example, at a BER of $10^{-5}$, 
the OTFS system with phase rotation achieves an SNR gain of about 17 dB 
compared to the system without phase rotation.

\begin{figure}
\centering
\includegraphics[width=9.5cm, height=6.5cm]{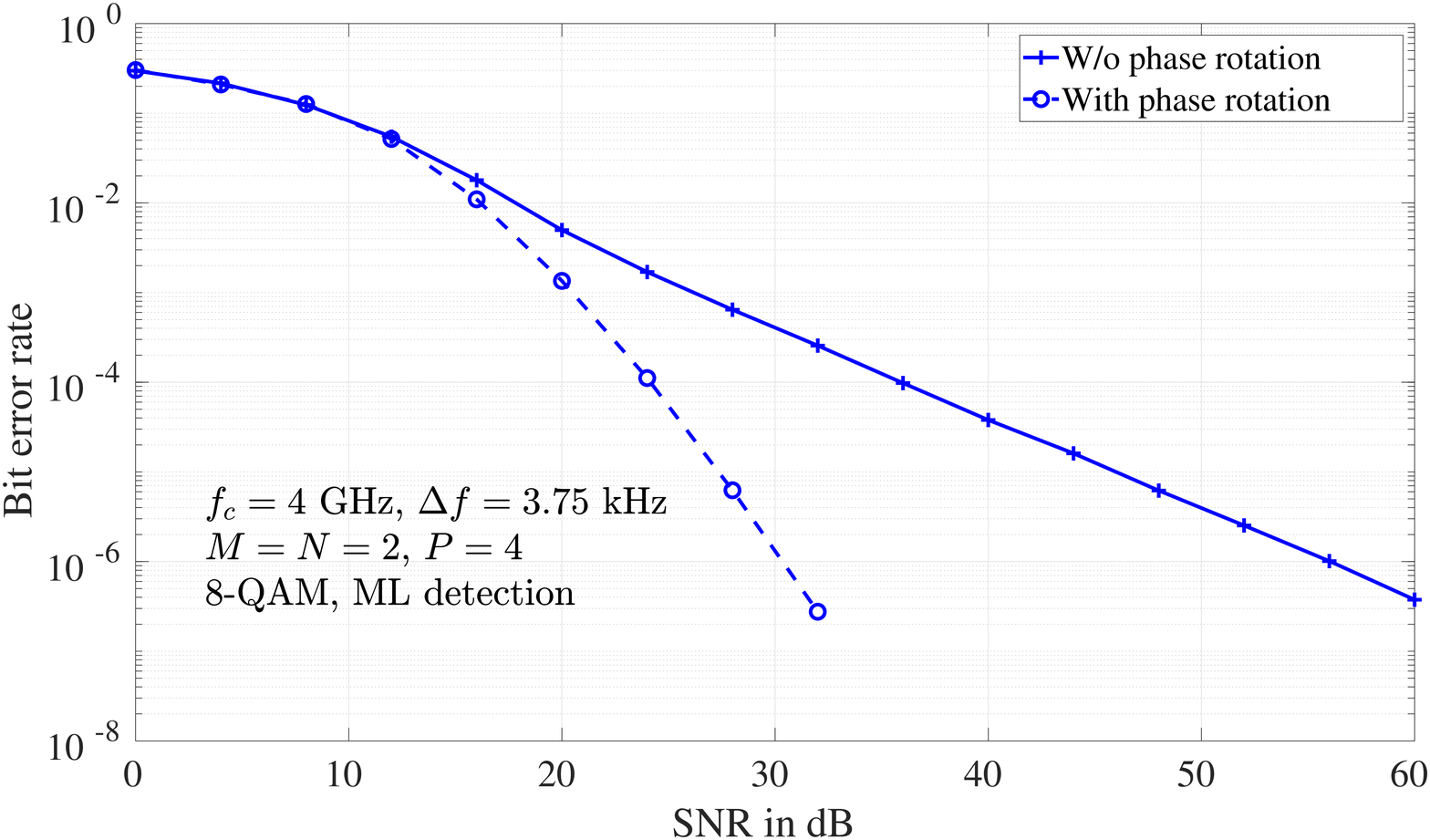}
\vspace{-2mm}
\caption{BER performance of OTFS without and with phase rotation, 
$M=N=2$, and 8-QAM.}
\vspace{-4mm}
\label{SISO_BER5x}
\end{figure}

\section{MIMO-OTFS modulation}
\label{sec5}
In this section, we consider OTFS modulation and its diversity order
in a MIMO setting. 

\subsection{MIMO-OTFS system model}
\label{sec5a}
Consider a MIMO-OTFS system shown in Fig. \ref{MIMO-OTFS} with $n_t$ 
transmit  and $n_r$ receive antennas. Each antenna transmits an 
independent OTFS signal vector. The channel gain between the $k$th 
transmit antenna and $l$th receive antenna in the delay-Doppler domain 
corresponding to delay $\tau$ and Doppler $\nu$ is given by
\begin{equation}
h_{lk}(\tau,\nu)=\sum_{i=1}^{P}h_{lk_i}
\delta(\tau -\tau_i) \delta(\nu-\nu_i),
\label{sparsechannelmimo}
\end{equation}
$k=1,2,\cdots,n_t$, $l=1,2,\cdots,n_r$, and $P$ is the number of 
channel taps. Let $\mathbf{H}_{lk}$ denote the $MN\times MN$ equivalent
channel matrix between the $k$th transmit antenna and $l$th receive
antenna. Let $\mathbf{x}_k$ denote the $MN\times 1$ transmit vector 
from the $k$th transmit antenna and $\mathbf{y}_l$ denote the 
$MN\times 1$ received vector at the $l$th receive antenna. Then, 
using the linear vector channel model in (\ref{vecform}) for SISO-OTFS, 
the linear system model describing the input and output relation in
MIMO-OTFS can be obtained as
\begin{eqnarray}
\mathbf{y}_1 & = & \mathbf{H}_{11}\mathbf{x}_1+\mathbf{H}
_{12}\mathbf{x}_2 + \cdots + \mathbf{H}_{1{n_t}}\mathbf{x}
_{n_t}+\mathbf{v}_1,   \nonumber \\
\mathbf{y}_2 & = & \mathbf{H}_{21}\mathbf{x}_1+\mathbf{H}
_{22}\mathbf{x}_2 + \cdots + \mathbf{H}_{2{n_t}}\mathbf{x}
_{n_t}+\mathbf{v}_2, \label{mimoeqns} \\
\vdots  \nonumber\\
\hspace{-4mm}\mathbf{y}_{n_r} & = & \mathbf{H}_{{n_r}1}
\mathbf{x}_1+\mathbf{H}_{{n_r}2}\mathbf{x}_2 + \cdots +
\mathbf{H}_{{n_r}{n_t}}\mathbf{x}_{n_t}+\mathbf{v}_{n_r}. \nonumber
\hspace{4mm}
\end{eqnarray}
Defining 
\begin{align*}
\  \mathbf{H}_{ {\tiny \mbox{MIMO}}} &= \begin{bmatrix}
\mathbf{H} _{11} & \mathbf{H}_{12} & \dots & \mathbf{H}
_{1{n_t}} \\
\mathbf{H} _{21} & \mathbf{H}_{22} & \dots & \mathbf{H}
_{2{n_t}} \\
\vdots & \vdots & \ddots & \vdots \\
\mathbf{H} _{{n_r}1} & \mathbf{H}_{{n_r}2} & \dots &
\mathbf{H}_{{n_r}{n_t}}
\end{bmatrix},
\end{align*}

\hspace{-5mm}
\begin{small}
$\mathbf{x}_{{\tiny \mbox{MIMO}}}={[{\mathbf{x}_1}^{T},
{\mathbf{x}_2}^{T},\cdots,{\mathbf{x}_{n_t}}^{T}] }^{T}$,
$\mathbf{y}_{{\tiny \mbox{MIMO}}} = {[{\mathbf{y}_1}^{T},
{\mathbf{y}_2}^{T},\cdots, {\mathbf{y}_{n_r}}^{T}] }^{T}$, 
$\mathbf{v}_{{\tiny \mbox{MIMO}}} = {[{\mathbf{v}_1}^{T},
{\mathbf{v}_2}^{T},\cdots,  {\mathbf{v}_{n_r}}^{T}] }^{T}$,
\end{small}
(\ref{mimoeqns}) can be written as
\begin{equation}
\mathbf{y}_{{\tiny \mbox{MIMO}}} = \mathbf{H}_{{\tiny
\mbox{MIMO}}}\mathbf{x}_{{\tiny \mbox{MIMO}}} + \mathbf{v}
_{\tiny \mbox{MIMO}},
\label{mimovecform}
\end{equation}
where
$\mathbf{x}_{{\tiny \mbox{MIMO}}} \in \mathbb{C} ^{{n_t}MN
\times 1} ,\mathbf{y}_{{\tiny \mbox{MIMO}}}, \mathbf{v}
_{\tiny \mbox{MIMO}} \in \mathbb{C} ^{{n_r}MN \times 1}$, and
$\mathbf{H}_{{\tiny \mbox{MIMO}}} \in \mathbb{C}^{{n_r}MN
\times {n_t}MN}$.

\begin{figure*}[t]
\centering
\includegraphics[width=14 cm, height=4.0cm]{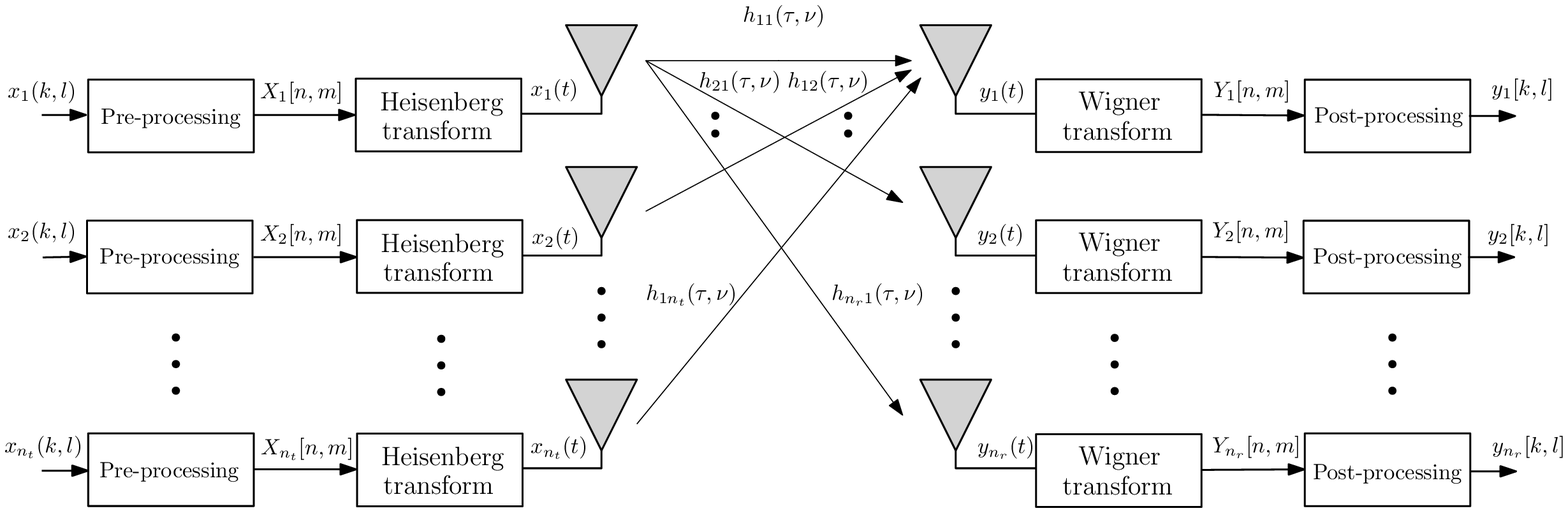}
\vspace{2mm}
\caption{MIMO-OTFS modulation scheme.}
\label{MIMO-OTFS}
\vspace{-4mm}
\end{figure*}

\subsection{Diversity of MIMO-OTFS}
\label{sec5b}
In this subsection, we derive the asymptotic diversity order of 
MIMO-OTFS. For this, first note that, in the effective channel
matrix $\mathbf{H}_{\tiny \mbox{MIMO}}$, each $\mathbf{H}_{lk}$ has
only $P$ unique entries, and hence $\mathbf{H}_{\tiny \mbox{MIMO}}$
has $Pn_tn_r$ unique entries. Further, each row of 
$\mathbf{H}_{{\tiny \mbox{MIMO}}}$ has only $n_tP$ non-zero elements 
and each column has only $n_rP$ non-zero elements. Following 
(\ref{hXform}), the MIMO-OTFS system model in (\ref{mimovecform}) 
can be written as
\begin{equation}
\begin{bmatrix}
\mathbf{y}_1^T \\
\mathbf{y}_2^T \\
\vdots\\
\mathbf{y}_{n_r}^T
\end{bmatrix}
 =
\begin{bmatrix}
\mathbf{h}'_{11} & \mathbf{h}'_{12} \cdots \mathbf{h}'_{1n_t} \\
\mathbf{h}'_{21} & \mathbf{h}'_{22} \cdots \mathbf{h}'_{2n_t} \\
\vdots \\
\mathbf{h}'_{n_r1} & \mathbf{h}'_{n_r2} \cdots \mathbf{h}'_{n_rn_t} \\
\end{bmatrix}
\begin{bmatrix}
\mathbf{X}_{1} \\
\mathbf{X}_{2}\\
\vdots\\
\mathbf{X}_{n_t}
\end{bmatrix}
+
\begin{bmatrix}
\mathbf{v}_1^T \\
\mathbf{v}_2^T \\
\vdots\\
\mathbf{v}_{n_r}^T
\end{bmatrix},
\label{hX_MIMO1}
\end{equation}
or equivalently
\begin{equation}
{\tilde{\bf y}}={\tilde{\bf H}}{\tilde{\bf X}}+\tilde{\mathbf{V}},
\label{hX_MIMO}
\end{equation}
where ${\tilde{\bf y}}$ is an $n_r\times MN$ received signal matrix 
whose $l$th row is the received vector received in the $l$th receive 
antenna, ${\tilde{\bf X}}$ is an $n_tP\times MN$ matrix obtained by 
stacking $n_t$ number of $P\times MN$ sized symbol matrices of the 
form (\ref{X_mat}), ${\tilde{\bf H}} \in \mathbb{C}^{n_r\times n_tP}$ 
is the channel matrix with $\mathbf{h}'_{lk} \in \mathbb{C}^{1\times P}$ 
consisting of $P$ unique non-zero entries of $\mathbf{H}_{lk}$, and
${\tilde{\bf V}} \in \mathbb{C}^{n_r\times MN}$ is the noise matrix.

Let $\tilde{\mathbf{X}}_i$ and $\tilde{\mathbf{X}}_j$ be two symbol 
matrices.  Assuming perfect channel state information and ML detection 
at the receiver, the probability of decoding the transmitted symbol 
matrix $\tilde{\mathbf{X}}_i$ in favor of $\tilde{\mathbf{X}}_j$ is 
given by
\begin{equation}
P(\tilde{\mathbf{X}}_{i}\rightarrow \tilde{\mathbf{X}}_j|\tilde{\mathbf{H}})=
Q \left( \sqrt{\frac{\|\tilde{\mathbf{H}}
(\tilde{\mathbf{X}}_i-\tilde{\mathbf{X}}_j)\|^2}{2N_0}} \right),
\label{MIMO_PEP1}
\end{equation}
and the average PEP is given by
\begin{equation}
P(\tilde{\mathbf{X}}_{i}\rightarrow \tilde{\mathbf{X}}_{j})=
\mathbb{E} \left[Q \left( \sqrt{\frac{\|\tilde{\mathbf{H}}
(\tilde{\mathbf{X}}_i-\tilde{\mathbf{X}}_j)\|^2}{2N_0}} \right)\right].
\label{MIMO_PEP2}
\end{equation}
Using Chernoff bound and the fact that each antenna transmits independent 
OTFS symbols, an upper bound on the PEP in (\ref{MIMO_PEP2}) can be 
obtained as \cite{DTse}
\begin{equation}
P(\tilde{\mathbf{X}}_{i}\rightarrow \tilde{\mathbf{X}}_{j}) \leq \left( \prod \limits_{l=1}^{r}\frac{1}{1+\frac{\gamma\lambda_{k,l}^2}{4P} }\right )^{n_r},
\label{MIMO_PEP3}
\end{equation}
where $\gamma=\frac{1}{N_0}$ is the SNR per receive antenna, $\lambda_{k,l}$ 
is the $l$th singular value of the difference matrix 
$\mathbf{\Delta}_{k,ij}=(\mathbf{X}_{k,i}-\mathbf{X}_{k,j})$ with 
$\mathbf{X}_{k,i}$ and $\mathbf{X}_{k,j}$ denoting OTFS symbol matrices 
transmitted from $k$th antenna (for some $k\in {1,2,\cdots,n_t}$) in 
$\tilde{\mathbf{X}}_i$ and $\tilde{\mathbf{X}}_j$, respectively,
and $r$ is the rank of $\mathbf{\Delta}_{k,ij}$. At high SNR values, 
(\ref{MIMO_PEP3}) simplifies to
\begin{equation}
P(\tilde{\mathbf{X}}_i\rightarrow \tilde{\mathbf{X}}_j)\leq
\frac{1}{\gamma^{n_rr}}\left(\prod_{l=1}^{r}
\frac{\lambda_{k,l}^2}{4P}\right)^{-n_r}.
\end{equation}
The PEP term with the minimum value of $r$ dominates the overall BER.  
Therefore, the diversity order achieved by MIMO-OTFS, denoted by 
$\rho_{\tiny \mbox{mimo-otfs}}$ is given by 
\begin{equation}
\rho_{\tiny \mbox{mimo-otfs}} =n_r\cdot \min_{i,j \ i\neq j}\ \mbox{rank}(\mathbf{\Delta}_{k,ij}). 
\label{div_order_mimo}
\end{equation}
Now, similar to the case of SISO-OTFS, if 
$\mathbf{X}_{k,i}=a.\mathbf{1}_{P\times MN}$ and
$\mathbf{X}_{k,j}=a'.\mathbf{1}_{P\times MN}$, then the 
difference matrix $\mathbf{\Delta}_{k,ij}=(a-a').\mathbf{1}_{P\times MN}$
has rank one. Hence, the asymptotic diversity order of MIMO-OTFS is $n_r$.
\hspace{4.4cm} $\square$

\subsection{Phase rotation for full diversity in MIMO-OTFS}
\label{sec5d}
In this subsection, we consider phase rotation to extract the full
diversity in MIMO-OTFS. The transmit vector in MIMO-OTFS is a 
concatenation of $n_t$ independent OTFS transmit vectors of size 
$MN\times 1$ as described by (\ref{mimovecform}). The $MN\times 1$
OTFS transmit vector from each antenna is multiplied by the phase 
rotation matrix $\mathbf{\Phi}$ given in (\ref{phase_rot}). The phase 
rotated MIMO-OTFS transmit vector is then given by
\begin{equation}
\mathbf{x}'_{ \tiny \mbox{{MIMO}}}= (\mathbf{I}_{n_t} \otimes \mathbf{\Phi})\mathbf{x}_{\tiny \mbox{{MIMO}}}.
\end{equation}
Let $\tilde{\mathbf{X}}'$ be the phase rotated MIMO-OTFS
symbol matrix corresponding to $\mathbf{x}'$. From
(\ref{hX_MIMO1}), $\tilde{\mathbf{X}}'$ is of the form
\begin{equation}
\tilde{\mathbf{X}}' = \begin{bmatrix}
\mathbf{X}'_{1} \\
\mathbf{X}'_{2}\\
\vdots\\
\mathbf{X}'_{n_t}
\end{bmatrix},
\end{equation}
where $\mathbf{X}'_k$ is the phase rotated OTFS symbol matrix
corresponding to the $k$th transmit antenna.
If $\tilde{\mathbf{X}}'_i$ and $\tilde{\mathbf{X}}'_j$ are two phase
rotated MIMO-OTFS symbol matrices corresponding to the
transmit vectors $\mathbf{x}'_{i, \tiny{\mbox{MIMO}}}$  and
$\mathbf{x}'_{j, \tiny{\mbox{MIMO}}}$, then their
difference matrix $\tilde{\mathbf{\Delta}}'_{ij}$ is of the form
\begin{equation}
\tilde{\mathbf{\Delta}}'_{ij} = \begin{bmatrix}
\mathbf{\Delta}'_{1,ij} \\
\mathbf{\Delta}'_{2,ij}\\
\vdots\\
\mathbf{\Delta}'_{n_t, ij}
\end{bmatrix},
\label{del_til}
\end{equation}
where $\mathbf{\Delta}'_{k,ij} = \mathbf{X}'_{k,i}-\mathbf{X}'_{k,j}$, 
with $\mathbf{X}'_{k,i}$ and $\mathbf{X}'_{k,j}$ being the phase 
rotated OTFS symbol matrices corresponding to the $k$th antenna in 
$\tilde{\mathbf{X}}'_i$ and $\tilde{\mathbf{X}}'_j$, respectively. 
From Sec. \ref{sec4}, it is known that $\mathbf{\Delta}'_{k,ij}$ has 
rank equal to $P$ for all $k={1,2,\cdots,n_t}$. Using this fact 
in (\ref{div_order_mimo}), the diversity order achieved by phase 
rotated MIMO-OTFS system is equal to $Pn_r$. \hspace{76mm} $\square$

\subsection{Simulation results}
\label{sec5c}
Figure \ref{MIMO_BER1} shows the BER performance of $1\times 1$ 
SISO-OTFS and $2\times 2$ MIMO-OTFS systems. Both the systems use 
$M=N=2$ and BPSK. The number of channel taps considered is $P=4$. The 
carrier frequency and the subcarrier spacing used are 4 GHz and 3.75 kHz, 
respectively. The considered simulation  parameters are  summarized in 
Table \ref{SimPar}. From the figure, it is observed that the
simulated BER for $1\times1$ SISO-OTFS and $2\times 2$ MIMO-OTFS 
show diversity orders of one and two, respectively, verifying the 
analytical diversity order derived in the previous subsection. 

\begin{figure}
\centering
\includegraphics[width=9.5cm, height=6.5cm]{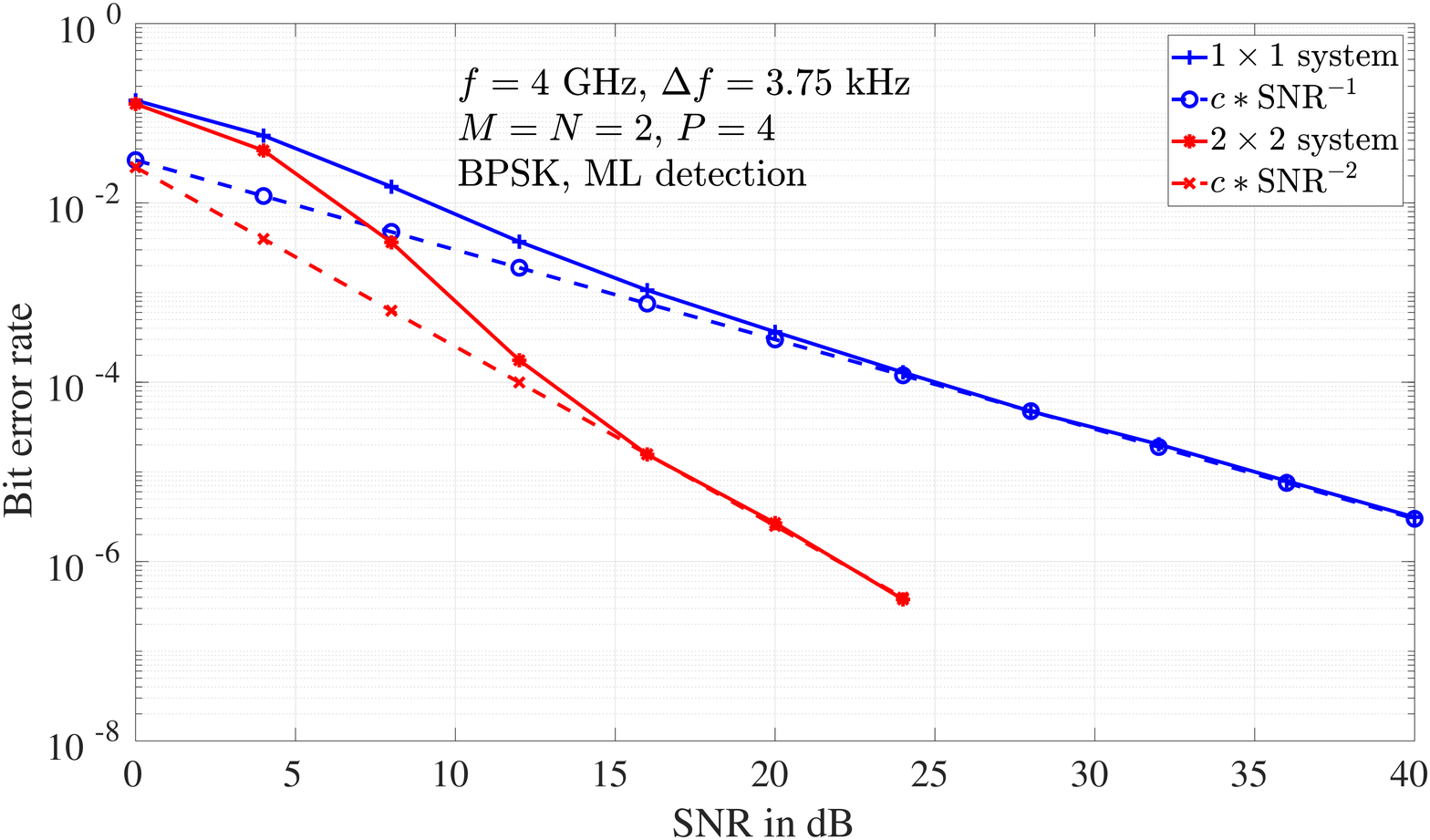}
\vspace{-2mm}
\caption{BER performance of $1\times 1$ SISO-OTFS and $2\times 2$ MIMO-OTFS 
systems.}
\vspace{-4mm}
\label{MIMO_BER1}
\end{figure}

Next, we consider the effect of increasing the frame size (i.e., $MN$) on 
the BER performance. Figure \ref{MIMO_BER2} shows the BER performance of
$1\times 2$ system with $i)$ $M=N=2$ and $ii)$ $M=4$, $N=2$. Both the 
systems use BPSK. The number of channel taps considered is $P=4$.  
Other simulation parameters are as given in Table \ref{SimPar}.
From the figure, we observe that the BER performance of the system 
with $M=4$ and $N=2$ is better than the system with $M=N=2$. This is 
similar to the SISO-OTFS result shown in Sec. \ref{sec3b}. Specifically, 
increasing the frame size ($MN$) results in  higher diversity order
in the finite SNR regime, before the asymptotic diversity order of 
$\rho_{\tiny \mbox{mimo-otfs}}=2$ takes over. It can be observed that, 
MIMO-OTFS can achieve diversity orders closer to $Pn_r$ in the finite 
SNR regime, as the size of the OTFS frame $MN$ is increased.
Figure \ref{PR_MIMO_OTFS} shows the BER performance of a $2\times 2$ 
MIMO-OTFS systems with and without phase rotation with $M=N=2$ and
BPSK. From the figure, it can be seen that the MIMO-OTFS system with 
phase rotation achieves the intended diversity benefit compared
to the diversity in MIMO-OTFS without phase rotation. 

\begin{figure}
\centering
\includegraphics[width=9.5cm, height=6.5cm]{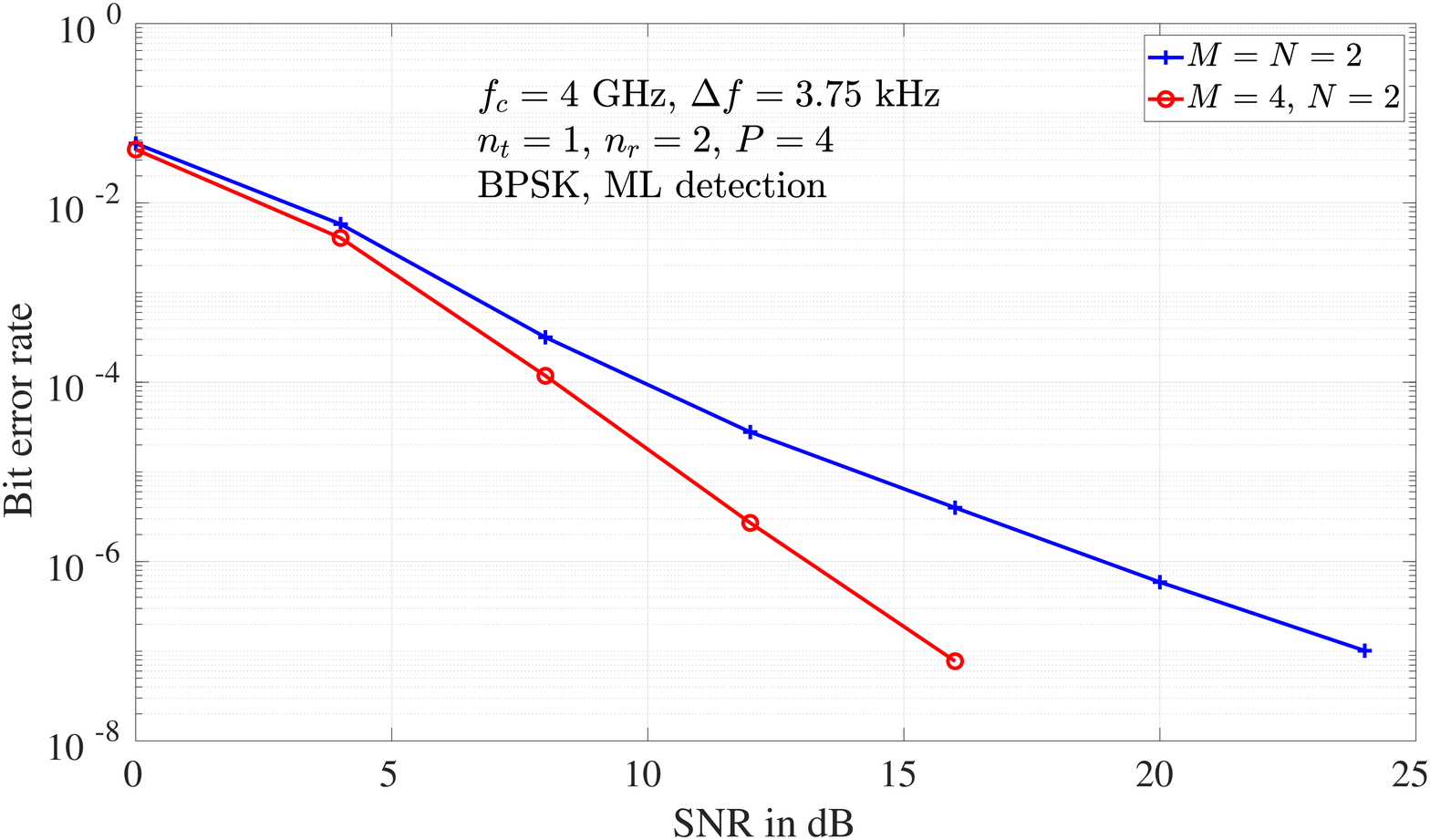}
\vspace{-4mm}
\caption{BER performance of $1\times 2$ OTFS system with $i)$ $M=N=2$ 
and $ii)$ $M=4$, $N=2$.}
\vspace{-5mm}
\label{MIMO_BER2}
\end{figure}

\begin{figure}
\centering
\includegraphics[width=9.5cm, height=6.5cm]{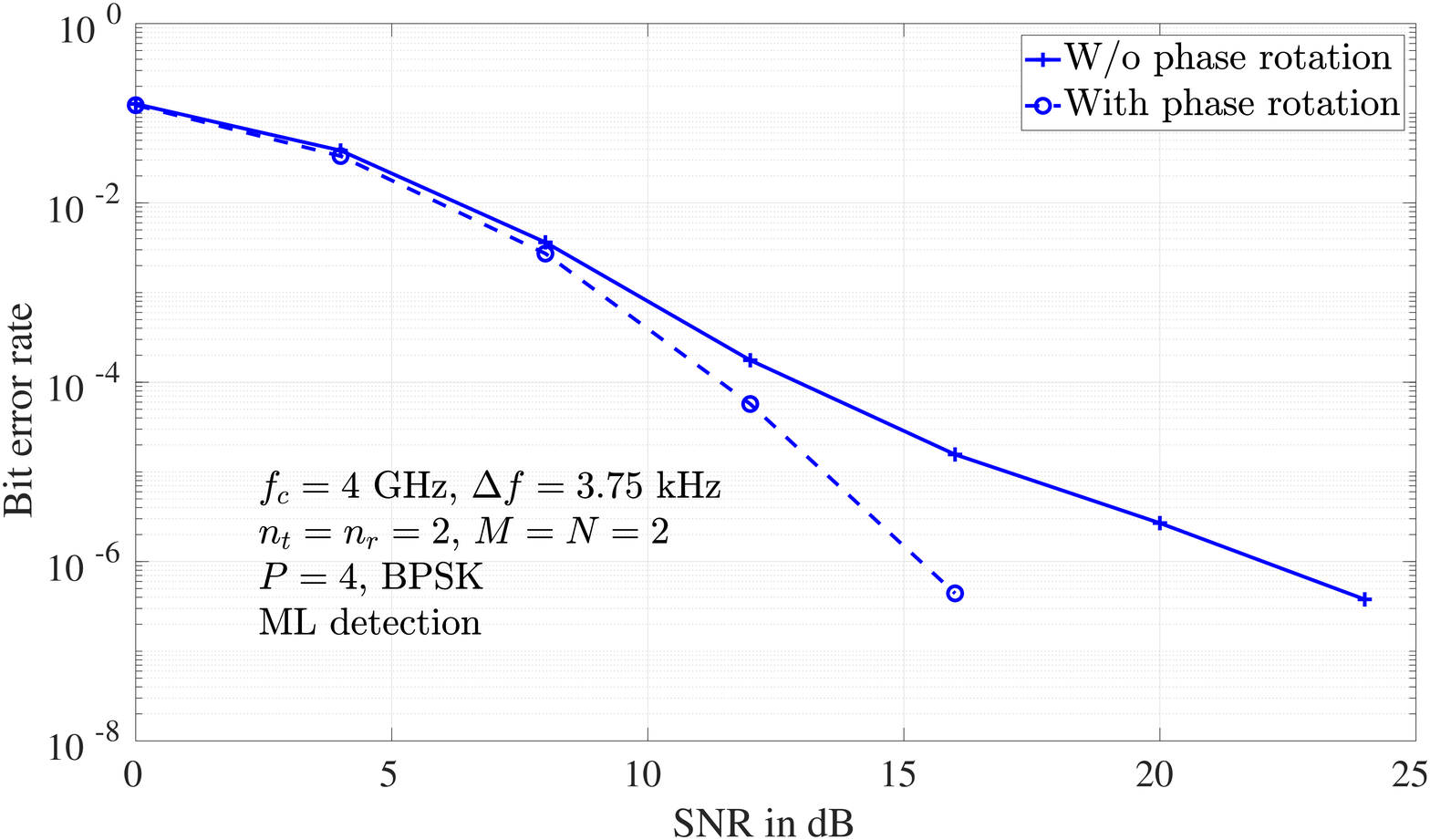}
\vspace{-4mm}
\caption{BER performance of $2\times 2$ MIMO-OTFS system without and with 
phase rotation, $M=N=2$.}
\label{PR_MIMO_OTFS}
\vspace{-6mm}
\end{figure}

\section{Conclusions}
\label{sec6}
We investigated the diversity of OTFS modulation and showed that the 
asymptotic diversity order of OTFS in a SISO setting with ML detection 
is one. Though the asymptotic diversity order is one, it was found that 
higher diversity performance can be achieved in the finite SNR regime before 
the diversity one regime takes over, and that the diversity one regime 
starts at lower BER values for increased frame sizes. These observations
were illustrated through a BER lower bound derived based on diversity one
PEPs and simulations. Next, a phase rotation scheme using transcendental 
numbers was proposed to extract the full diversity offered by the 
delay-Doppler channel. It was proved that the proposed phase rotation
achieves full diversity. Finally, we extended the diversity analysis
and results for MIMO-OTFS without and with phase rotation. 
Timing/frequency offset and synchronization effects and link adaptation 
in OTFS can be considered for future work. More robust systems targeting 
ultra-reliable and low-latency communication can be considered with proper 
configuration for $M$, $N$, coding, and the consideration of the natural 
presence of residual synchronization effects in combination with the 
proposed rotation scheme. 

\appendices
\section{Diversity Analysis for Non-zero Fractional Delays and Dopplers}
\label{app_a}
Recall the channel representation in the delay-Doppler domain denoted by 
$h(\tau,\nu)$ in (\ref{sparsechannel}). Consider the case of non-zero
fractional delays and Dopplers, i.e., consider
\begin{equation}
\tau_i=\frac{\alpha_i+a_i}{M\Delta f}, \ \ \mbox{and} \ \ \nu_i=
\frac{\beta_i+b_i}{NT},
\end{equation}
where $\alpha_i=[\tau_iM\Delta f]^{\odot}$, $\beta_i=[\nu_iNT]^\odot$, and 
$[.]^\odot$ denotes the nearest integer operator (i.e., rounding operator). 
Note that $\alpha_i$ and $\beta_i$ are integers corresponding to the indices 
of the delay $\tau_i$ and Doppler frequency tap $\nu_i$, respectively, and 
$a_i$, $b_i$ are the fractional delay and Doppler such that 
$-\frac{1}{2}< a_i,b_i,\leq \frac{1}{2}$. With this, we now proceed to 
derive the input-output relation for OTFS modulation taking into account 
the fractional part of the delay and Doppler shifts. Substituting 
(\ref{sparsechannel}) and (\ref{winfunc}) into (\ref{circ_conv}), and 
assuming rectangular window functions, we get
\begin{eqnarray}
h_w(\tau,\nu) & = &  \sum_{i=1}^{P}h_ie^{-j2\pi \tau_i \nu_i}w(\nu-\nu_i,
\tau-\tau_i) \nonumber \\
& = &  \sum_{i=1}^{P}h_ie^{-j2\pi \tau_i \nu_i}\mathcal{G}(\nu,\nu_i)
\mathcal{F}(\tau,\tau_i),
\label{Hw1}
\end{eqnarray}
where
\begin{eqnarray}
\mathcal{G}(\nu,\nu_i) & \triangleq & \sum_{n'=0}^{N-1} e^{-j2\pi (\nu-\nu_i)n'T}, 
\label{G} \nonumber \\
\mathcal{F}(\tau,\tau_i) & \triangleq & \sum_{m'=0}^{M-1} e^{j2\pi (\tau-\tau_i)m'
\Delta f}.
\label{F}
\end{eqnarray}
In order to use $h_w(\tau,\nu)$ of (\ref{Hw1}) in the OTFS input-output 
relation in (\ref{otfsinpoutp}), we need to evaluate $h_w(\tau,\nu)$ at 
$\nu=\frac{k-k'}{NT}, \tau=\frac{l-l'}{M\Delta f}$. Evaluating 
$\mathcal{G}(\nu,\nu_i)$ at $\nu=\frac{k-k'}{NT}$, we get
\begin{eqnarray}
\mathcal{G}\left(\frac{k-k'}{NT},\nu_i \right) & = & \sum_{n'=0}^{N-1} 
e^{-j\frac{2\pi}{N} (k-k'-\beta_i-b_i)n'}  \nonumber \\
& = & \frac{e^{-j2\pi(k-k'-\beta_i-b_i)}-1}{e^{-j\frac{2\pi}{N}(k-k'-
\beta_i-b_i)}-1}.
\label{G1}
\end{eqnarray} 
Note that due to the fractional Doppler $b_i$, for a given $k$,
$\mathcal{G}\left(\frac{k-k'}{NT},\nu_i \right)\neq 0$, for all $k'$.
It has been shown in \cite{otfs4} that the magnitude of $\mathcal{G}
\left(\frac{k-k'}{NT},\nu_i \right)$ has a peak at $k'=k-\beta_i$
and decreases as $k'$ moves away from $k-\beta_i$. Similarly, evaluating 
$\mathcal{F}(\tau,\tau_i)$ at $\tau=\frac{l-l'}{M\Delta f}$, we get
\begin{eqnarray}
\mathcal{F}\left(\frac{l-l'}{M\Delta f},\tau_i \right) & = & \sum_{m'=0}^{M-1} 
e^{j\frac{2\pi}{M} (l-l'-\alpha_i-a_i)m'} \nonumber \\
& = & \frac{e^{j2\pi(l-l'-\alpha_i-a_i)}-1}{e^{j\frac{2\pi}{M}(l-l'-
\alpha_i-a_i)}-1}.
\label{F1}
\end{eqnarray} 
Note that due to the fractional delay $a_i$, for a given $l$,
$\mathcal{F}\left(\frac{l-l'}{M\Delta f},\tau_i \right)\neq 0$, for all 
$l'$. Using the same argument used for 
$\mathcal{G}\left(\frac{k-k'}{NT},\nu_i \right)$, it follows that the 
magnitude of $\mathcal{F} \left(\frac{l-l'}{M\Delta f},\tau_i \right)$ 
has a peak at $l'=l-\alpha_i$ and decreases as $l'$ moves away from 
$l-\alpha_i$. Now, using (\ref{G1}), (\ref{F1}), and (\ref{Hw1}) in 
(\ref{otfsinpoutp}), we get
\begin{align}
y[k,l] = & \sum_{i=1}^{P}\sum_{q=0}^{M-1}\sum_{q'=0}^{N-1}
\left(\frac{e^{j2\pi(-q-a_i)}-1}{Me^{j\frac{2\pi}{M}(-q-a_i)}-M}\right) \nonumber \\ 
& \left(\frac{e^{-j2\pi(-q'-b_i)}-1}
{Ne^{-j\frac{2\pi}{N}(-q'-b_i)}-N}\right) h_ie^{-j2\pi \tau_i
\nu_i} \nonumber \\
&  x[(k-\beta_i+q')_N,(l-\alpha_i+q)_M].
\label{inpoutfracdeldop}
\end{align}
The input-output equation in (\ref{inpoutfracdeldop}) can be written
in vectorized form as
\begin{equation}
\mathbf{y}=\mathbf{Hx}+\mathbf{v},
\label{Hx_frac_del_dop}
\end{equation}
where $\mathbf{x}$, $\mathbf{y}$, $\mathbf{v}\in \mathbb{C}^{MN\times 
1}$, $\mathbf{H}\in \mathbb{C}^{MN\times MN}$, and the elements of 
$\mathbf{x}$, $\mathbf{y}$, and $\mathbf{H}$ are determined from
(\ref{inpoutfracdeldop}).

\subsection{Diversity analysis}
\label{app_b}
The vectorized input-output relation
in (\ref{Hx_frac_del_dop}) can be rewritten in an alternate form as
\begin{equation}
\mathbf{y}^T=\mathbf{h}'\mathbf{X}+\mathbf{v}^T,
\label{hXform1}
\end{equation}
where $\mathbf{y}^T$ is $1\times MN$ received vector, $\mathbf{h}'$
is a $1\times P$ vector whose $i$th entry is given by $h_ie^{-j2\pi 
\tau_i \nu_i}$, and $\mathbf{X}$ is a $P\times MN$ matrix whose 
$i$th column ($i=k+Nl$,\ $i=0,1,\cdots,MN-1$), denoted by $\mathbf{X}
[i]$, is given by (\ref{X_mat1}).
\begin{figure*}[t]
\begin{small}
\begin{equation}
\hspace{1mm}
\mathbf{X}[i]=\begin{bmatrix}
\displaystyle \sum_{q=0}^{M-1}\sum_{q'=0}^{N-1}
\left(\frac{e^{j2\pi(-q-a_1)}-1}{Me^{j\frac{2\pi}{M}(-q-a_1)}-M}\right)\left(\frac{e^{-j2\pi(-q'-b_1)}-1}
{Ne^{-j\frac{2\pi}{N}(-q'-b_1)}-N}\right)x[(k-\beta_1+q')_N,(l-\alpha_1+q)_M] \\
\displaystyle \sum_{q=0}^{M-1}\sum_{q'=0}^{N-1}
\left(\frac{e^{j2\pi(-q-a_2)}-1}{Me^{j\frac{2\pi}{M}(-q-a_2)}-M}\right)\left(\frac{e^{-j2\pi(-q'-b_2)}-1}
{Ne^{-j\frac{2\pi}{N}(-q'-b_2)}-N}\right)x[(k-\beta_2+q')_N,(l-\alpha_2+q)_M] \\
\vdots \\
\displaystyle \sum_{q=0}^{M-1}\sum_{q'=0}^{N-1}
\left(\frac{e^{j2\pi(-q-a_P)}-1}{Me^{j\frac{2\pi}{M}(-q-a_P)}-M}\right)\left(\frac{e^{-j2\pi(-q'-b_P)}-1}
{Ne^{-j\frac{2\pi}{N}(-q'-b_P)}-N}\right)x[(k-\beta_P+q')_N,(l-\alpha_P+q)_M]
\end{bmatrix}
\label{X_mat1}. 
\end{equation}
\end{small}
\end{figure*}

The representation of $\mathbf{X}$ in the form given in (\ref{hXform1}) 
allows us to view $\mathbf{X}$ as a $P\times MN$ symbol matrix. For 
convenience, we normalize the elements of $\mathbf{X}$ so that the 
average energy per symbol time is one. The SNR, denoted by $\gamma$, is 
therefore given by $\gamma=1/N_0$. Assuming perfect channel state 
information and ML detection at the receiver, the PEP between 
$\mathbf{X}_i$ and $\mathbf{X}_j$ is given by 
\begin{equation}
P(\mathbf{X}_i\rightarrow \mathbf{X}_j|\mathbf{h'},\mathbf{X}_i)=Q \left( \sqrt{\frac{\|\mathbf{h'}(\mathbf{X}_i-\mathbf{X}_j)\|^2}{2N_0}} \right).
\label{PEP1_1} 
\end{equation}
The PEP averaged over the channel statistics is given by
\begin{equation}
\small
P(\mathbf{X}_i\rightarrow \mathbf{X}_j)=\mathbb{E}
 \left[ Q \left( \sqrt{\frac{\gamma\ \|\mathbf{h'}
(\mathbf{X}_i-\mathbf{X}_j)\|^2}{2}} \right) \right].
\label{PEP2_1}
\end{equation}
As in Sec. \ref{sec3}, (\ref{PEP2_1}) can be obtained as 
\begin{equation}
P(\mathbf{X}_i\rightarrow \mathbf{X}_j) = \mathbb{E} \left[Q\left(\sqrt{\frac{\gamma \ \sum_{l=1}^r \lambda_l^2|\tilde{h}_l|^2}{2}}\right)\right],
\label{PEP3_1}
\end{equation}
where $r$ denotes the rank of the difference matrix 
$\mathbf{\Delta}_{ij}=(\mathbf{X}_i-\mathbf{X}_j)$, $\tilde{h}_l$ is the 
$l$th element of the vector $\tilde{\bf h}^H=\mathbf{U}^H\mathbf{h'}^H$, where
the matrix $(\mathbf{X}_i-\mathbf{X}_j) (\mathbf{X}_i-\mathbf{X}_j)^H$ is
Hermitian matrix that is diagonalizable by unitary transformation and hence
can be written as 
$(\mathbf{X}_i-\mathbf{X}_j)(\mathbf{X}_i-\mathbf{X}_j)^H=\mathbf{U\Lambda U}^H$,
where $\mathbf{U}$ is unitary and 
$\mathbf{\Lambda}=\mbox{diag}\lbrace \lambda_1^2, \cdots \lambda_P^2 \rbrace$,
$\lambda_i$ being $i$th singular value of the  difference matrix
$\mathbf{\Delta}_{ij}$,
The average PEP in (\ref{PEP3_1}) can be simplified to get the following 
upper bound on PEP \cite{DTse}
\begin{equation}
P(\mathbf{X}_i\rightarrow \mathbf{X}_j)  \leq \prod 
\limits_{l=1}^{r}\frac{1}{1+\ \dfrac{\gamma \lambda_l^2}{4P}}, 
\label{PEP4_1}
\end{equation}
which, at high SNRs, can be further simplified as
\begin{equation}
P(\mathbf{X}_i\rightarrow \mathbf{X}_j)  \leq \frac{1}{\gamma^r\prod 
\limits_{l=1}^{r} \dfrac{\lambda_l^2}{4P}}. 
\label{PEP5_1}
\end{equation}
From (\ref{PEP5_1}), it can be seen that the exponent of the SNR term 
$\gamma$ is $r$, which is equal to the rank of the difference matrix
$\mathbf{\Delta}_{ij}$. For all $i,j$, $i\neq j$, the PEP with the 
minimum value of $r$ dominates the overall BER. Therefore, the achieved 
diversity order, denoted by $\rho_{\tiny \mbox{siso-otfs}}$, is given by 
\begin{equation}
\rho_{\tiny \mbox{siso-otfs}} = \min_{i,j \ i\neq j}\ \mbox{rank}(\mathbf{\Delta}_{ij}).
\label{div_order_1}
\end{equation}
Now, consider a case when $x_i[k,l]=a$ and $x_j[k,l]=a'$, 
$\forall k=0,\cdots,N-1$ and $l=0,\cdots,M-1$. Then, 
$\mathbf{\Delta}_{ij}=(\mathbf{X}_i-\mathbf{X}_j)$ will be of the form 
$(a-a').\mathbf{Z}_{P \times MN}$, where each column of $\mathbf{Z}$ is 
identical and of the form given by
{\footnotesize
\begin{align}
(a-a')\begin{bmatrix}
\displaystyle \sum_{q=0}^{M-1}\sum_{q'=0}^{N-1}
\left(\tfrac{e^{j2\pi(-q-a_1)}-1}{Me^{j\frac{2\pi}{M}(-q-a_1)}-M}\right)\left(\tfrac{e^{-j2\pi(-q'-b_1)}-1}
{Ne^{-j\tfrac{2\pi}{N}(-q'-b_1)}-N}\right)\\
\displaystyle \sum_{q=0}^{M-1}\sum_{q'=0}^{N-1}
\left(\tfrac{e^{j2\pi(-q-a_2)}-1}{Me^{j\frac{2\pi}{M}(-q-a_2)}-M}\right)\left(\tfrac{e^{-j2\pi(-q'-b_2)}-1}
{Ne^{-j\frac{2\pi}{N}(-q'-b_2)}-N}\right) \\
\vdots \\
\displaystyle \sum_{q=0}^{M-1}\sum_{q'=0}^{N-1}
\left(\tfrac{e^{j2\pi(-q-a_P)}-1}{Me^{j\frac{2\pi}{M}(-q-a_P)}-M}\right)\left(\tfrac{e^{-j2\pi(-q'-b_P)}-1}
{Ne^{-j\tfrac{2\pi}{N}(-q'-b_P)}-N}\right)
\end{bmatrix}.
\label{Zmat}
\end{align}
}
Since all columns of $\mathbf{Z}$ are identical (independent of $k$ and $l$) 
with the form (\ref{Zmat}), rank of $\mathbf{Z}$ is clearly one, which is 
the minimum rank of $ \mathbf{\Delta}_{ij}$, $\forall i,j$, $i\neq j$. 
Hence, the asymptotic diversity order of OTFS with ML detection in the 
case of fractional delay and Dopplers is also one. 
\hspace{25mm} $\square$

{\em Simulation results:}
Figure \ref{SISO_BER_frac_delayDop} shows the BER performance with 
non-zero fractional delays and Dopplers. The figure shows the performance 
of two systems, $i)$ system-1 with $M=N=2$ and $ii)$ system-2 with $M=4$ 
and $N=2$. The carrier frequency and the subcarrier spacing used are 4 GHz 
and 3.75 kHz, respectively. Both the systems use BPSK and ML detection. 
A channel with $P=4$ paths with a maximum Doppler of 1.875 kHz (which 
corresponds to a speed of 506.25 km/h at 4 GHz carrier frequency), 
exponential power delay profile, and Jakes Doppler spectrum \cite{spectrum}
is considered.
The input-output relation in (\ref{inpoutfracdeldop}) which considers
the fractional part of the delay and Doppler values is used for the
simulations. From Fig. \ref{SISO_BER_frac_delayDop}, it is evident that 
the asymptotic diversity order of OTFS modulation is one in the case of 
non-zero fractional delays and Dopplers. Also, the asymptotic diversity 
order of one is achieved at lower BER values for increased values of $M$ 
and $N$. This behavior is the same as that observed in Sec. \ref{sec3}, 
where analysis and simulations were carried out without considering 
fractional delay and Doppler values. 

\begin{figure}[t]
\centering
\includegraphics[width=9.5cm, height=6.5cm]
{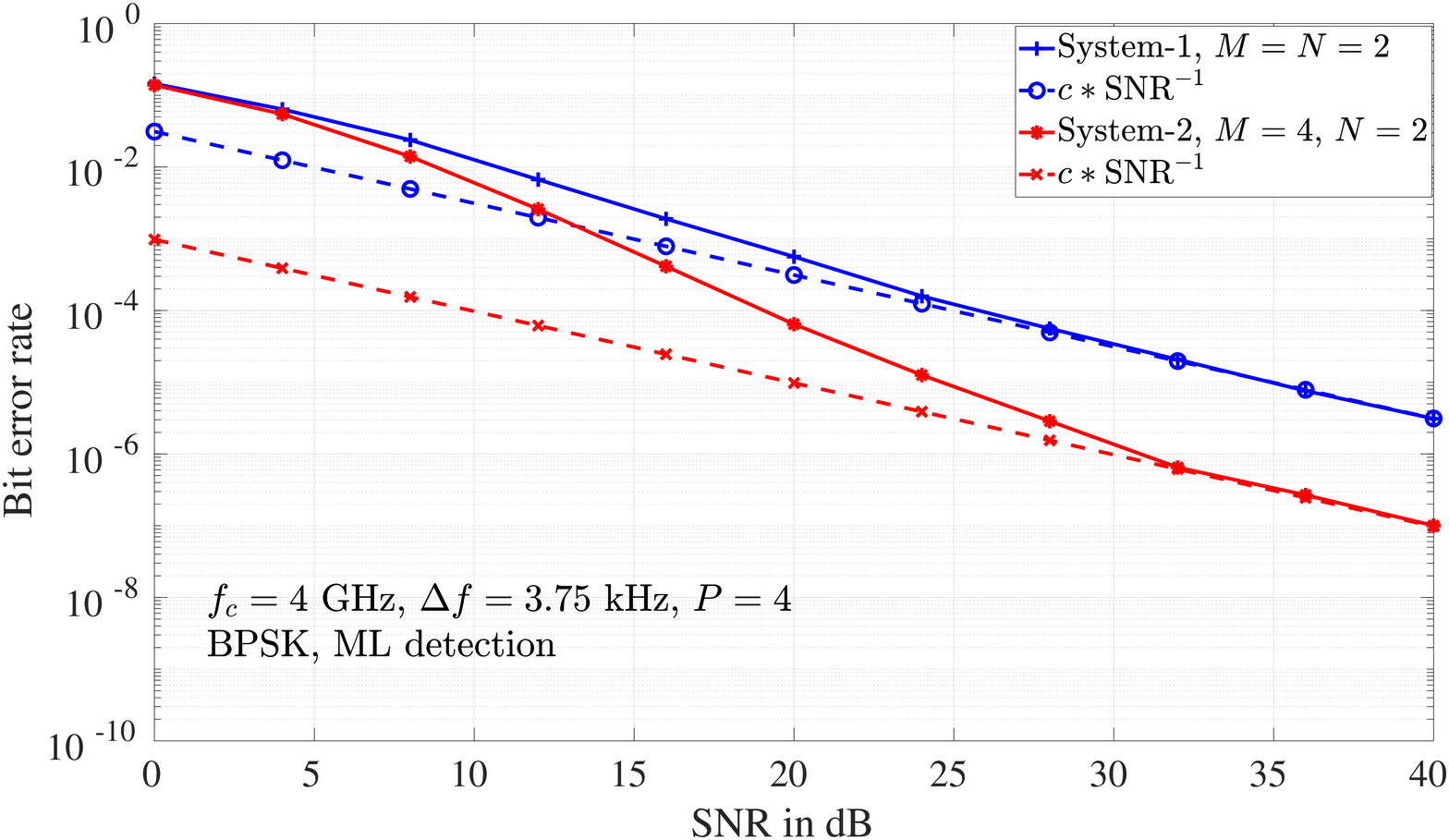}
\vspace{-2mm}
\caption{BER performance of OTFS for $i)$ $M=2$, $N=2$ and $ii)$ $M=4$, 
$N=2$, considering non-zero fractional delay and Doppler values.}
\vspace{-4mm}
\label{SISO_BER_frac_delayDop}
\end{figure}


\begin{thebibliography}{1}
\bibitem{jakes}
W. C. Jakes, {\em Microwave Mobile Communications}, New York: IEEE 
Press, reprinted, 1994. 

\bibitem{ofdm1}
T. Wang, J. G. Proakis, E. Masry, and J. R. Zeidler, ``Performance 
degradation of OFDM systems due to Doppler spreading,'' {\em IEEE 
Trans. Wireless Commun.}, vol. 5, no. 6, pp. 1422-1432, Jun. 2006.

\bibitem{otfs2}
R. Hadani, S. Rakib, M. Tsatsanis, A. Monk, A. J. Goldsmith, 
A. F. Molisch, and R. Calderbank, ``Orthogonal time frequency space 
modulation,'' in {\em Proc. IEEE WCNC'2017}, pp. 1-7, Mar. 2017.

\bibitem{otfs1a}
R. Hadani, S. Rakib, S. Kons, M. Tsatsanis, A. Monk, C. Ibars, J. Delfeld, 
Y. Hebron, A. J. Goldsmith, A. F. Molisch, and R. Calderbank, ``Orthogonal 
time frequency space modulation,'' arXiv:1808.00519v1 [cs.IT] 1 Aug 2018.

\bibitem{otfswhitepaper}
R. Hadani and A. Monk, ``OTFS: A new generation of modulation addressing 
the challenges of 5G,'' arXiv:1802.02623 [cs.IT] 7 Feb 2018.

\bibitem{otfs1}
A. Monk, R. Hadani, M. Tsatsanis, and S. Rakib, ``OTFS - orthogonal time
frequency space: a novel modulation technique meeting 5G high mobility 
and massive MIMO challenges,'' arXiv:1608.02993 [cs.IT] 9 Aug 2016.

\bibitem{otfs3}
R. Hadani, S. Rakib, A. F. Molisch, C. Ibars, A. Monk, M. Tsatsanis, 
J. Delfeld, A. Goldsmith, and R. Calderbank, ``Orthogonal time frequency 
space (OTFS) modulation for millimeter-wave communications systems,''
in {\em Proc. IEEE MTT-S Intl. Microwave Symp.}, pp. 681-683, Jun. 2017.

\bibitem{otfs6}
L. Li, H. Wei, Y. Huang, Y. Yao, W. Ling, G. Chen, P. Li, and Y. Cai,
``A simple two-stage equalizer with simplified orthogonal time frequency 
space modulation over rapidly time-varying channels,'' 
arXiv:1709.02505v1 [cs.IT] 8 Sep 2017.

\bibitem{otfs7}
A. Farhang, A. R. Reyhani, L. E. Doyle, and B. Farhang-Boroujeny,
``Low complexity modem structure for OFDM-based orthogonal time frequency 
space modulation,'' {\em IEEE Wireless Commun. Lett.}, vol. 7, no. 3, 
pp. 344-347,  Jun. 2018.

\bibitem{otfs4}
P. Raviteja, K. T. Phan, Y. Hong, and E. Viterbo, ``Interference
cancellation and iterative detection for orthogonal time frequency
space modulation,'' {\em IEEE Trans. Wireless Commun.}, vol. 17,
no. 10, pp. 6501-6515, Aug. 2018.

\bibitem{otfs5}
K. R. Murali and A. Chockalingam, ``On OTFS modulation for high-Doppler 
fading channels,'' in {\em Proc. ITA'2018}, Feb. 2018.

\bibitem{otfs9}
M. K. Ramachandran and A. Chockalingam, ``MIMO-OTFS in high-Doppler fading 
channels: signal detection and channel estimation,'' in {\em Proc. IEEE 
GLOBECOM'2018}, Dec. 2018. Online: arXiv:1805.02209v1 [cs.IT] 6 May 2018.

\bibitem{mtone1}
R. Nissel, S. Schwarz, and M. Rupp, ``Filter bank multicarrier modulation 
schemes for future mobile communications,'' {\em IEEE J. Sel. Areas Commun.}, 
vol. 35, no. 8, pp. 1768-1782, Aug. 2017.

\bibitem{gfdm1}
N. Michailow, M. Matthe, I. S. Gaspar, A. N. Caldevilla, L. L. Mendes, 
A. Festag, and G. Fettweis, ``Generalized frequency division multiplexing
for 5th generation cellular networks,'' {\em IEEE Trans. Commun.}, vol. 62, 
no. 9, pp. 3045-3061, Sep. 2014.

\bibitem{otfs_gfdm}
A. Nimr, M. Chafii, M. Matthe, and G. Fettweis, ``Extended GFDM framework: 
OTFS and GFDM comparison,'' online: arXiv:1808.01161v1 [eess.SP] 3 Aug 2018.

\bibitem{pulse_design1}
T. Strohmer and S. Beaver, ``Optimal OFDM design for time-frequency
dispersive channels,'' {\em IEEE Trans. Commun.}, vol. 51, no. 7, pp. 
1111-1122, Jul. 2003.

\bibitem{pulse_design2}
H. Bolcskei, P. Duhamel, and R. Hleiss, ``Design of pulse shaping 
OFDM/OQAM systems for high data-rate transmission over wireless 
channels,'' in {\em Proc. IEEE ICC'1999}, Jun. 1999.

\bibitem{Heisenberg}
S. D. Howard, A. R. Calderbank, and W. Moran, ``The finite Heisenberg-Weyl 
groups in radar and communications,'' {\em EURASIP Journal on 
Applied Signal Processing}, Jan. 2006.

\bibitem{rep_theory}
P. Jung, ``Weyl-Heisenberg representations in communication theory,''
Ph.D. dissertation, Technische Universitat, Berlin, 2007.

\bibitem{DSFT}
M. Dorfler and B. Torresani, ``Representation of operators in the 
time-frequency domain and generalized Gabor multipliers,'' online: 
arXiv:0809.2698v1 [math.AP] 16 Sep 2008.

\bibitem{DDchannelest}
A. Fish, S. Gurevich, R. Hadani, A. M. Sayeed, and O. Schwartz, 
``Delay-Doppler channel estimation in almost linear complexity,''
{\em IEEE Trans. Inform. Theory}, vol. 59, no. 11, pp. 7632-7644, 
Nov. 2013.

\bibitem{DTse}
D. Tse and P. Viswanath, {\em Fundamentals of Wireless Communication}, 
Cambridge University Press, 2005.

\bibitem{spectrum}
F. Hlawatsch and G. Mats, {\em Wireless Communications over Rapidly 
Time-Varying Channels}, Academic Press, 2011.

\bibitem{ieee802.11p}
A. M. S. Abdelgader and W. Lenan, ``The physical layer of the IEEE 802.11p 
WAVE communication standard: the specifications and challenges,''
{\em Proceedings of the World Congress on Engineering and Computer Science}, 
Oct. 2014.

\bibitem{iter_dec_otfs}
T. Zemen, M. Hofer, D. Löschenbrand, and C. Pacher, ``Iterative 
detection for orthogonal precoding in doubly selective channels,'' 
in {\em Proc. IEEE PIMRC'2018}, Sep. 2018.

\bibitem{davis}
P. J. Davis, {\em Circulant Matrices}, American Mathematical Society, 2012.

\bibitem{stbc_numthry}
M. O. Damen, A. Tewfik, and J. C. Belfiore, ``A construction of a space–time 
code based on number theory,'' {\em IEEE Trans. Inform. Theory}, vol. 48, 
no. 3, pp. 753-760, Mar. 2002.

\end{thebibliography}
\end{document}